\newenvironment{enum_W}
  {%
  \setlength{\leftmargini}{4em}\begin{enumerate}}
  {\end{enumerate}}
\def\rme{\mathrm{e}}
\def\dj{u}
\def\dk{v}
\def\rme{\mathrm{e}}
\def\rmi{\mathrm{i}}
\def\ltwo{\mathrm{L}^2}
\def\cp{\stackrel{P}{\longrightarrow}}
\def\cl{\stackrel{\mathcal{L}}{\longrightarrow}}
\def\dwt{W}
\def\bdwt{\mathbf{\dwt}}
\def\indexset{\mathcal{I}}
\def\L{\mathrm{T}}
\def\be{\mathbf{e}}
\def\diffop{\mathbf{\Delta}}
\def\1{\mathbbm{1}}
\def\bx{\mathbf{x}}
\def\Zset{\mathbb{Z}}
\def\Rset{\mathbb{R}} 
\def\PE{\mathbb{E}} 
\def\PVar{\mathrm{Var}}
\def\PCov{\mathrm{Cov}}
\def\Cum{\mathrm{Cum}}
\def\ie{\textit{i.e.} }
\def\argmin{\mathop{\mathrm{Argmin}}}
\def\prob{\mathbb{P}}
\newcommand{\eqdef}{\ensuremath{\stackrel{\mathrm{def}}{=}}}
\newcommand{\eqsp}{\;}
\newcommand{\AVvar}[3][]
{
\ifthenelse{\equal{#1}{}}{\mathbf{V}_{#3}(#2)}{\mathbf{V}_{#3}(#2,#1)}}
\newcommand{\AVvarJoint}[3][]
{
\ifthenelse{\equal{#1}{}}{\mathbf{W}_{#3}(#2)}{\mathbf{W}_{#3}(#2,#1)}}
\newcommand{\AsympVarWWE}[2][]
{\mathrm{V}(#2)}
\newcommand{\AVvarInv}[3][]
{\ifthenelse{\equal{#1}{}}{\mathbf{V}^{-1}_{#3}(#2)}{\mathbf{V}^{-1}_{#3}(#2,#1)}}
\newcommand{\sigmaasymp}[2][]
{
\ifthenelse{\equal{#1}{}}{\sigma(#2)}{\sigma(#2,#1)}}
\def\vjsymb{\sigma}
\newcommand{\vj}[4][]{%
\ifthenelse{\equal{#1}{}}{\vjsymb^{#4}_{#2}(#3)}{\vjsymb^{#4}_{#2}(#3,#1)}}
\newcommand{\stdj}[3][]{%
\ifthenelse{\equal{#1}{}}{\vjsymb_{#2}(#3)}{\vjsymb_{#2}(#3,#1)}}
\def\rmd{\mathrm{d}}
\newcommand{\hvj}[3][]{%
\ifthenelse{\equal{#1}{}}{\hat{\vjsymb}^2_{#2}}{\hat{\vjsymb}^2_{#2}(#1)}}
\def\densletter{\mathbf{D}}
\newcommand{\bdens}[4][]{%
\densletter_{#2}({#3};#4)}
\def\cum{\mathcal{K}}
\def\allWA{\ref{item:Wreg}--\ref{item:Wvstd}}
\renewcommand{\hat}{\widehat}
\renewcommand{\tilde}{\widetilde}
\newcommand\pent[1]{\left\lfloor #1 \right\rfloor}
\newcommand\den[1]{\left|\mathbf{D}_{#1,0}(\lambda;f)\right|^2d\lambda}
\def\cov{\mathrm{Cov}}
\def\Cum{\mathrm{Cum}}
\def\Sbar{\bar{\sigma}^2_{j}}
\newcommand\somme[2]{\sum\limits_{#1}^{#2}}
\def \2{2^{J_2-j}}
\def \2p{2^{J_2-j'}}
\newcommand{\scalogram}[1]{\widehat{\sigma}_{#1}^2}
\date{\today}
\title{Testing for homogeneity of variance in the wavelet domain.}
\author{O.~Kouamo\inst{1} \and E.~Moulines\inst{2} \and F.~Roueff \inst{2} }
\institute{ENSP, LIMSS, BP : 8390 Yaoundé, \texttt{olaf.kouamo@telecom-paristech.fr} \and Institut Télécom, Télécom ParisTech, CNRS UMR 5181, Paris, \texttt{eric.moulines,francois.roueff@telecom-paristech.fr}}
\begin{document}

\maketitle
\begin{abstract}
The danger of confusing long-range dependence with non-stationarity
has been pointed out by many authors. Finding an answer to this difficult question is  of importance to model time-series showing trend-like behavior, such as  river run-off in hydrology, historical temperatures in the study of climates changes, or packet counts in network traffic engineering.

The main goal of this paper is to develop a test procedure to detect the presence of non-stationarity for a class of processes whose $K$-th order difference is stationary. Contrary to most of the proposed methods, the test procedure has the same distribution for short-range and long-range dependence covariance stationary processes, which means that this test is able to detect the presence of non-stationarity for processes showing long-range dependence or which are unit root.

The proposed test is formulated in the wavelet domain, where a change in the generalized spectral density results in a change in the variance of wavelet coefficients at one or several scales. Such tests have been already proposed in
 \cite{whitcher:2001}, but these authors do not have taken into account the dependence of the wavelet coefficients within scales and between scales. Therefore, the asymptotic distribution of the test they have proposed was erroneous; as a consequence,  the level of the test under the null hypothesis of stationarity was wrong.

 In this contribution, we introduce two test procedures, both using  an estimator of the variance of the scalogram at one or several scales. The asymptotic distribution of the test under the null is rigorously justified. The pointwise consistency of the test in the presence of a single jump in the general spectral density is also be presented.

A limited Monte-Carlo experiment is performed to illustrate our findings.
\end{abstract}
\section{Introduction}

For time series of short duration, stationarity and short-range dependence have usually been regarded to be approximately valid.
However, such an assumption becomes questionable  in the large data sets currently investigated in geophysics, hydrology or financial econometrics.
There has been a long lasting controversy to decide whether the deviations to ``short memory stationarity'' should be attributed to long-range dependence or are related to the presence of breakpoints in the mean, the variance, the covariance function or other types of more sophisticated structural changes.
The links between non-stationarity and long-range dependence (LRD) have been pointed out by many authors in the hydrology literature long ago:
\cite{klemes:1974} and \cite{boes:salas:1978} show that non-stationarity in the mean provides a possible
explanations of the so-called Hurst phenomenon. \cite{potter:1976} and later \cite{rao:yu:1986} suggested that more sophisticated changes may occur, and have proposed a method to detect such changes. The possible confusions between long-memory and some forms of nonstationarity have been discussed in the applied probability literature:
\cite{bhattacharya:gupta:waymire:1983} show that long-range dependence may be confused with the presence of a small monotonic trend.
This phenomenon has also been discussed in the econometrics literature. \cite{hidalgo:robinson:1996} proposed a test of presence of structural change in a long memory environment. \cite{granger:hyung:1999}  showed that linear processes with breaks can  mimic the autocovariance structure of a linear fractionally integrated long-memory process (a stationary process that encounters occasional regime
switches will have some properties that are similar to those of a long-memory process). Similar behaviors are considered in
\cite{diebold:inoue:2001} who provided simple and intuitive econometric models showing  that long-memory and structural
changes are easily confused. \cite{mikosch:starica:2004} asserted
that what had been seen by many authors as long memory in the volatility of the absolute values or the square of the log-returns
might, in fact, be explained by abrupt changes in the parameters of an underlying
GARCH-type models. \cite{berkes:horvatz:2006} proposed a testing procedure for distinguishing between a weakly dependent time
series with change-points in the mean and a long-range dependent time series. \cite{hurvich:lang:soulier:2005} have proposed
a test procedure for detecting long memory in presence of deterministic trends.

The procedure described in this paper deals with the problem of detecting changes which may occur in the
spectral content of a process. We will consider a process $X$ which, before and after the change, is not necessary stationary
but whose difference of at least a given order is stationary, so that polynomial trends up to that order can be discarded.
Denote by $\diffop X$ the first order difference of $X$,
$$
[\diffop X]_n \eqdef X_n - X_{n-1},\quad n\in\Zset\;,
$$
and define, for an integer $K \geq 1$ , the $K$-th order difference recursively as follows: $\diffop^K = \diffop \circ \diffop^{K-1}$.
A process $X$ is said to be $K$-th order difference stationary if $\diffop^K X$ is covariance stationary.
Let $f$ be a non-negative $2\pi$-periodic symmetric function such that there exists an integer $K$ satisfying,
$\int_{-\pi}^\pi |1-\rme^{-\rmi\lambda}|^{2K} f(\lambda) \rmd \lambda < \infty$.
We say that the process $X$ admits \emph{generalized  spectral density} $f$ if $\diffop^K X$
is weakly stationary and with spectral density function
\begin{equation}
\label{eq:generalized-spectral-density}
f_K(\lambda)=|1-\rme^{-\rmi\lambda}|^{2K}f(\lambda) \eqsp.
\end{equation}
This class of process include both short-range dependent
and long-range dependent processes, but also unit-root and fractional unit-root processes.
The main goal of this paper is to develop a testing procedure for distinguishing between a $K$-th order stationary process
and a non-stationary process.

In this paper, we consider the so-called \emph{a posteriori} or \emph{retrospective} method (see
\cite[Chapter 3]{brodsky:darkhovsky:2000}). The proposed test is formulated in the wavelet domain, where a change in the
generalized spectral density results in a change in the variance of the wavelet
coefficients. Our test is based on a CUSUM statistic, which is perhaps the most extensively used statistic for detecting and
estimating change-points in mean. In our procedure, the CUSUM is applied to the partial sums of the squared wavelet coefficients at a given scale or on a specific
range of scales. This procedure extends the test introduced in \cite{inclan:tiao:1994} to detect changes in the variance of
an independent sequence of random variables. To describe the idea, suppose that, under the null hypothesis, the time series
is $K$-th order difference stationary and that, under the alternative, there is one breakpoint where the generalized spectral
density of the process changes. We consider the scalogram in the range of scale $J_1,J_1+1,\dots,J_2$. Under
the null hypothesis, there is no change
in the variance of the wavelet coefficients at any given scale $j \in\{J_1,\dots,J_2\}$. Under the alternative,
these variances takes different values before and after the change point. The amplitude of the change depends on the scale,
and the change of the generalized spectral density.
We consider the $(J_2-J_1+1)$-dimensional  W2-CUSUM statistic $\{T_{J_1,J_2}(t),\,t \in [0,1]\}$
defined by \eqref{eq:cusum-multiple}, which is a CUSUM-like statistics applied to the square of the wavelet coefficients.
Using $T_{J_1,J_2}(t)$ we can construct an estimator $\hat{\tau}_{J_1,J_2}$ of the change point (no matter if a change-point
exists or not), by minimizing an appropriate norm of the W2-CUSUM statistics, $\hat{\tau}_{J_1,J_2}= \argmin_{t \in [0,1]}
\|T_{J_1,J_2}(t)\|_\star$.  The
statistic $T_{J_1,J_2}(\hat{\tau}_{J_1,J_2})$ converges to a well-know distribution under the null hypothesis (see Theorems
\ref{theo:main-result-singlescale} and \ref{theo:main-results-multiplescale}) but diverges to infinity under the
alternative (Theorems~\ref{theo:power-single-case} and~\ref{theo:power-multi-case}). A similar idea has been proposed by
\cite{whitcher:2001} but these authors did not take into account
the dependence of wavelet coefficient, resulting in an erroneous normalization and asymptotic distributions.

The paper is organized as follows.
In Section \ref{sec:wavelet-setting}, we introduce the wavelet setting and the relationship between the generalized spectral
density and the variance of wavelet coefficients at a given scale.
In Section \ref{sec:AsymptoticDistributionW2CUSUM}, our main assumptions are formulated and
the asymptotic distribution of the W2-CUSUM statistics  is  presented first
in the single scale (sub-section \ref{sec:single-scale}) and then in the multiple scales (sub-section \ref{sec:multiple-scale}) cases. In Section \ref{sec:test-statistic}, several possible test procedures
are described to detect the presence of changes at a single scale or simultaneously  at several
scales. In Section \ref{sec:applications}, finite sample performance of the test procedure is studied based on Monte-Carlo experiments.

\section{The wavelet transform of $K$-th order difference stationary processes}
\label{sec:wavelet-setting}
In this section, we introduce the wavelet setting, define the scalogram and explain how spectral change-points
can be observed in the wavelet domain. The main advantage of using the wavelet domain is to alleviate problems arising
when the time series exhibit is long range dependent.  We will recall some basic results obtained in \cite{moulines:roueff:taqqu:2007:jtsa} to support our claims. We refer the reader to that paper for the proofs of the stated results.

\noindent\textbf{The wavelet setting.}
The wavelet setting involves two functions $\phi$ and $\psi$ and their Fourier transforms
\[
\hat{\phi}(\xi) \eqdef \int_{-\infty}^\infty \phi(t) \rme^{- \rmi \xi t}\,dt \quad
\text{and}
\quad
\hat{\psi}(\xi) \eqdef \int_{-\infty}^\infty \psi(t) \rme^{- \rmi \xi t}\,dt,
\]
and assume the following:
\begin{enum_W}
\item\label{item:Wreg} $\phi$ and $\psi$ are compactly-supported, integrable, and $\hat{\phi}(0) = \int_{-\infty}^\infty \phi(t)\,dt = 1$ and $\int_{-\infty}^\infty \psi^2(t)\,dt = 1$.
\item\label{item:psiHat}
There exists $\alpha>1$ such that
$\sup_{\xi\in\Rset}|\hat{\psi}(\xi)|\,(1+|\xi|)^{\alpha} <\infty$.
\item\label{item:MVM} The function $\psi$ has  $M$ vanishing moments, \ie\ $ \int_{-\infty}^\infty t^m \psi(t) \,dt=0$ for all $m=0,\dots,M-1$
\item\label{item:MIM} The function $ \sum_{k\in\Zset} k^m\phi(\cdot-k)$
is a polynomial of degree $m$ for all $m=0,\dots,M-1$.
\end{enum_W}\def\allWA{\ref{item:Wreg}-\ref{item:MIM}}
The fact that both $\phi$ and $\psi$ have finite support (Condition~\ref{item:Wreg}) ensures that the corresponding filters
(see~(\ref{eq:FilterJ})) have finite impulse responses (see~(\ref{eq:FilterJ-FIR})). While the support of the Fourier
transform of $\psi$ is the whole real line, Condition~\ref{item:psiHat} ensures that this Fourier transform decreases quickly
to zero. Condition~\ref{item:MVM} is an important characteristic of wavelets: it ensures that they oscillate and that their
scalar product with continuous-time polynomials up to degree $M-1$ vanishes.
Daubechies wavelets and Coiflets having at least two vanishing moments satisfy these conditions.


Viewing the wavelet $\psi(t)$ as a basic template, define the family $\{\psi_{j,k}, j \in \Zset, k \in \Zset\}$ of translated and dilated functions
\begin{equation}\label{eq:psiJK}
\psi_{j,k}(t)=2^{-j/2}\,\psi(2^{-j}t-k) ,\quad j\in\Zset,\, k\in\Zset \eqsp .
\end{equation}
Positive values of $k$ translate $\psi$ to the right, negative values to the left. The \emph{scale index} $j$ dilates $\psi$
so that large values of $j$ correspond to coarse scales and hence to low frequencies.

Assumptions \allWA\ are standard in the context of a multiresolution analysis (MRA) in which case, $\phi$ is the
scaling function and $\psi$ is the associated wavelet, see for instance \cite{mallat:1998,cohen:2003}.
Daubechies wavelets and Coiflets are examples of orthogonal wavelets constructed using an MRA.
In this paper, we do not assume the wavelets to be orthonormal nor
that they are associated to a multiresolution analysis. We may therefore work with other convenient choices for $\phi$ and $\psi$
as long as~\allWA\ are satisfied.

\noindent\textbf{Discrete Wavelet Transform (DWT) in discrete time.}
We now describe how the wavelet coefficients are defined in discrete time, that is for a real-valued sequence
$\{x_k,\,k\in\Zset\}$ and for a finite sample $\{x_k,\,k=1,\dots,n\}$. Using the scaling function $\phi$, we first
interpolate these discrete values to construct the following continuous-time functions
\begin{equation}\label{eq:bX}
\bx_n(t) \eqdef \sum_{k=1}^n x_k \,\phi(t-k) \quad\text{and}\quad \bx(t) \eqdef \sum_{k\in\Zset} x_k\, \phi(t-k), \quad
t\in\Rset \; .
\end{equation}
Without loss of generality we may suppose that the support of the scaling function $\phi$ is included in $[-\L,0]$ for some
integer $\L\geq1$. Then
$$
\bx_n(t)=\bx(t)\quad\text{for all}\quad t\in[0, n-\L+1]\;.
$$
We may also suppose that the support of the wavelet function $\psi$ is included in $[0,\L]$. With these conventions, the support of $\psi_{j,k}$ is included in the
interval $[2^j k, 2^j(k+\L)]$.
Let $\tau_0$ be an arbitrary shift order.
The wavelet coefficient $\dwt^\bx_{j,k}$ at scale $j\geq0$ and location $k\in\Zset$ is formally defined
as the scalar product in $\ltwo(\mathbb{R})$ of the function $t \mapsto \bx(t)$ and the wavelet $t \mapsto \psi_{j,k}(t)$:
\begin{equation}\label{eq:coeffN}
\dwt^\bx_{j,k} \eqdef \int_{-\infty}^\infty \bx(t) \psi_{j,k}(t)\,dt
=\int_{-\infty}^\infty \bx_n(t) \psi_{j,k}(t)\,dt,
 \quad  j \geq 0, k \in \Zset \;,
\end{equation}
when $[2^j k, 2^j k+\L)]\subseteq [0, n-\L+1]$, that is, for all $(j,k)\in\indexset_n$, where
$\indexset_n= \{(j,k):\,j\geq0, 0 \leq k  < n_j \}$ with $n_j= 2^{-j}(n-\L+1)-\L+1$.
It is important to observe that the definition of the wavelet coefficient $\dwt_{j,k}$ at a given index $(j,k)$ does not depend
on the sample size $n$ (this is in sharp contrast with Fourier coefficients).
For ease of presentation, we will use the convention that at each scale $j$, the first available  wavelet coefficient
$\dwt_{j,k}$ is indexed by $k=0$, that is,
\begin{equation}
\label{eq:deltan}
\indexset_n \eqdef \{(j,k):\,j\geq0, 1\leq k \leq n_j \}\quad\text{with}\quad n_j= 2^{-j}(n-\L+1)-\L+1 \eqsp.
\end{equation}

\noindent\textbf{Practical implementation.}
In practice the DWT of  $\{x_k,\,k=1,\dots,n\}$ is not computed using~(\ref{eq:coeffN}) but by linear filtering and
decimation. Indeed the wavelet coefficient $\dwt^\bx_{j,k}$ can be expressed  as
\begin{equation}\label{eq:down}
\dwt^{\bx}_{j,k}=\sum_{l\in\Zset} x_l\,h_{j,2^jk-l},\quad (j,k)\in\indexset_n ;,
\end{equation}
where
\begin{equation}\label{eq:FilterJ}
 h_{j,l} \eqdef 2^{-j/2} \int_{-\infty}^\infty \phi(t+l)\psi(2^{-j}t)\,dt \; .
\end{equation}
For all $j\geq0$, the discrete Fourier transform of the transfer function $\{ h_{j,l} \}_{l \in \Zset}$  is
\begin{equation}\label{eq:fjdef}
H_j(\lambda) \eqdef \sum_{l\in\Zset}h_{j,l}\rme^{-\rmi\lambda l} =2^{-j/2} \int_{-\infty}^\infty
\sum_{l\in\Zset}\phi(t+l)\rme^{-\rmi\lambda l} \psi(2^{-j}t)\,dt.
\end{equation}
Since $\phi$ and $\psi$ have compact support, the sum in~(\ref{eq:fjdef}) has only a finite number of non-vanishing terms and,
$H_j(\lambda)$ is the transfer function of a finite impulse response filter,
\begin{equation}\label{eq:FilterJ-FIR}
H_j(\lambda)=\sum_{l=-\L(2^j+1)+1}^{-1} h_{j,l}\rme^{-\rmi\lambda l} \; .
\end{equation}
When $\phi$ and $\psi$ are the scaling and the wavelet functions associated to a MRA, the wavelet coefficients may be obtained recursively
by applying a finite order filter and downsampling by an order $2$. This recursive procedure
is referred to as \emph{the pyramidal   algorithm}, see for instance \cite{mallat:1998}.

\noindent\textbf{The  wavelet spectrum and the scalogram.}
Let $X=\{X_t,\;t\in\Zset\}$ be a real-valued process with  wavelet coefficients $\{\dwt_{j,k},\,k\in\Zset\}$ and define
$$
\sigma_{j,k}^2=\PVar(\dwt_{j,k}) \; .
$$
If $\diffop^M X$ is stationary, by Eq~$(16)$
in~\cite{moulines:roueff:taqqu:2007:jtsa}, we have that, for all $j$, the process of its wavelet coefficients at scale $j$,
$\{\dwt_{j,k},\,k\in\Zset\}$,  is also stationary. Then, the wavelet variance $\sigma_{j,k}^2$ does not depend on $k$, $\sigma^2_{j,k}= \sigma_j^2$. The sequence $(\sigma_j^2)_{j\geq0}$ is called the \emph{wavelet spectrum} of the process $X$.

If moreover  $\diffop^M X$ is centered, the wavelet spectrum can be estimated by using the scalogram, defined as the empirical mean
of the squared wavelet coefficients computed from the sample $X_1,\dots, X_n$:
$$
\scalogram{j}=\frac1{n_j}\sum_{k=1}^{n_j} \dwt_{j,k}^2 \; .
$$
By~\cite[Proposition~1]{moulines:roueff:taqqu:2007:jtsa}, if $K\leq M$, then the scalogram of $X$ can be expressed using
the generalized spectral density $f$ appearing in~(\ref{eq:generalized-spectral-density}) and the filters $H_j$ defining the
DWT in~(\ref{eq:fjdef}) as follows:
\begin{equation}\label{eq:scalogramVSgenspecdens}
\sigma_j^2 =\int_{-\pi}^\pi \left|H_j(\lambda)\right|^2\,f(\lambda)\,\rmd\lambda ,\quad j\geq 0\;.
\end{equation}

\section{Asymptotic distribution of the W2-CUSUM statistics}
\label{sec:AsymptoticDistributionW2CUSUM}
\subsection{The single-scale case}
\label{sec:single-scale}

To start with simple presentation and statement of results, we first focus in this section on a test procedure aimed at detecting
a change in the variance of the wavelet coefficients  at a single scale  $j$.
Let $X_1,\ldots,X_n$ be the $n$ observations of a time series, and denote by  $W_{j,k}$  for
$(j,k)\in\indexset_n$  with $\indexset_n $ defined in \eqref{eq:deltan} the associated  wavelet coefficients.
In view of~\eqref{eq:scalogramVSgenspecdens}, if $X_1,\dots,X_n$ are a $n$ successive observations of a $K$-th order
difference stationary process, then the wavelet variance at each given scale $j$ should be constant. If the process $X$ is not
$K$-th order stationary, then it can be expected that the wavelet variance will change either gradually or abruptly (if there is
a shock in the original time-series). This thus suggests to investigate the consistency of the variance of the wavelet coefficients.

There are many works aimed at detecting the change point in the variance of a sequence of independent random variables;
such problem has also been considered, but much less frequently,
for sequences of dependent variables. Here, under the null assumption of $K$-th order difference stationarity, the wavelet coefficients $\{\dwt_{j,k},\;k\in\Zset\}$ is a covariance  stationary sequence whose spectral density is given by (see~\cite[Corollary~1]{moulines:roueff:taqqu:2007:jtsa})
\begin{equation}
\label{density1scale}
\bdens[\phi,\psi]{j,0}{\lambda}{f} \eqdef
\sum_{l=0}^{2^{j}-1}   f(2^{-j}(\lambda+2l\pi))\,
2^{-j} \left|H_{j}(2^{-j}(\lambda+2l\pi))\right|^2 \eqsp.
\end{equation}
We will adapt the approach developed in \cite{inclan:tiao:1994}, which uses cumulative sum (CUSUM) of squares to detect change points in the variance.

In order to define the test statistic, we first introduce a change point estimator for the mean of the square of the wavelet coefficients
at each scale $j$.
\begin{equation}\label{est-change-time}
\hat{k}_j=\underset{1\leq k\leq n_j}{\mathrm{argmax}}\left|\sum\limits_{1\leq i\leq k}W_{j,i}^2-\frac{k}{n_j}\sum\limits_{1\leq i\leq n_j}W_{j,i}^2\right| \eqsp.
\end{equation}
Using this change point estimator, the  W2-CUSUM statistics is defined as
\begin{equation}
\label{eq:cusum}
T_{n_j}=\frac{1}{n_j^{1/2}s_{j,n_j}}\left|\sum_{1\leq i\leq \hat{k}_j}W_{j,i}^2-\frac{\hat{k}_j}{n_j}\sum_{1\leq i\leq n_j}W_{j,i}^2\right| \eqsp,
\end{equation}
where $s_{j,n_j}^2$ is a suitable estimator of the variance of the sample mean of the $W_{j,i}^2$.
Because wavelet coefficients at a given scale are correlated, we use the Bartlett estimator of the variance, which is defined by
\begin{equation}\label{bartlett estimator}
s_{j,n_j}^2=\hat{\gamma}_{j}(0)+2\sum\limits_{1\leq l\leq q(n_j)}w_l\big(q(n_j)\big)\hat{\gamma}_{j}(l) \eqsp,
\end{equation}
where
\begin{equation}\label{eq:gamma}
\hat{\gamma}_{j}(l) \eqdef \frac{1}{n_j}\sum\limits_{1\leq i\leq n_j-l}(W_{j,i}^2-\scalogram{j})(W_{j,i+l}^2-\scalogram{j}),
\end{equation}
are the sample autocovariance of $\{W_{j,i}^2,\,i=1,\dots,n_j\}$, $\scalogram{j}$ is the scalogram and,
for a given integer $q$,
\begin{equation}\label{eq-bandwidth}
w_l(q)=1-\frac{l}{1+q} \eqsp, l \in \{0, \dots, q\} \eqsp
\end{equation}
are the so-called Bartlett weights.

The test differs from statistics proposed in \cite{inclan:tiao:1994} only in its denominator, which is the square root of a
consistent estimator of the partial sum's variance. If $\{ X_n \}$ is  short-range
dependent, the variance of the partial sum of the scalograms is not simply the sum of the
variances of the individual square wavelet coefficient, but also includes the autocovariances of these termes.
Therefore, the estimator of the averaged scalogram variance involves not only sums of squared deviations of the scalogram coefficients, but
also its weighted autocovariances up to lag $q(n_j)$. The weights $\{w_l(q(n_j))\}$ are those
suggested by \cite{newey:west:1987} and always yield a positive sequence of autocovariance, and
a positive estimator of the (unnormalized) wavelet spectrum at scale $j$, at frequency zero using a Bartlett window.
We will first established the consistency of the estimator $s_{j,n_j}^2$ of the variance of the scalogram at scale $j$ and the convergence
of the empirical process of the square wavelet coefficients to the Brownian motion.
Denote by $D([0,1])$ is the Skorokhod space of functions which are right continuous at each point of $[0,1)$ with left limit of $(0,1]$ (or
\emph{cadlag} functions). This space is, in the sequel, equipped with the classical Skorokhod metric.
%
\begin{theorem} \label{theo:main-result-singlescale}
Suppose that  $X$ is a Gaussian process with generalized spectral density $f$. Let $(\phi,\psi)$ be a scaling and a wavelet function satisfying \ref{item:Wreg}-\ref{item:MIM}. Let $\{ q(n_j) \}$ be a non decreasing sequence of integers satisfying
\begin{equation}
\label{eq:condition-q2}
q(n_j)\rightarrow\infty \;\;\;\text{and} \;\;\;q(n_j)/n_j \rightarrow 0 \;\; as \;\;n_j\rightarrow \infty.
\end{equation}
Assume that $\Delta^MX$ is non-deterministic and centered, and that $\lambda^{2M}f(\lambda)$ is two times differentiable in $\lambda$ with bounded second order derivative. Then for any fixed scale $j$, as $n\rightarrow\infty,$
\begin{equation}\label{consistency}
s_{j,n_j}^2 \cp\frac{1}{\pi}\int_{-\pi}^{\pi}\den{j},
\end{equation}
where $\bdens[\phi,\psi]{j,0}{\lambda}{f}$ is the wavelet coefficients spectral density at scale $j$ see (\ref{density1scale}).
Moreover, defining $\sigma_j^2$ by~\eqref{eq:scalogramVSgenspecdens},
\begin{equation}\label{convergence}
\frac{1}{n_j^{1/2}s_{j,n_j}} \sum\limits_{i=1}^{[n_j t]} \left(W_{j,i}^2-\sigma_j^2\right) \cl B(t)\;\; \text{in}\;\;
D([0,1]), \quad \text{as} \quad n\rightarrow\infty
\end{equation}
where $(B(t), t \in [0,1])$ is the standard Brownian motion.
\end{theorem}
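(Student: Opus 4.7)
The plan is to establish \eqref{consistency} and the functional invariance principle \eqref{convergence} separately and to combine them via Slutsky. The wavelet setting combined with the hypothesis that $\Delta^M X$ is centred Gaussian (and $K\leq M$, ensured by \ref{item:MVM}) implies, through the results of \cite{moulines:roueff:taqqu:2007:jtsa} recalled in the previous section, that $\{W_{j,k}\}_{k\in\Zset}$ is itself a centred stationary Gaussian sequence with spectral density $\mathbf{D}_{j,0}(\cdot;f)$ defined in \eqref{density1scale}. The smoothness assumption on $\lambda^{2M}f(\lambda)$, combined with $|H_j(\lambda)|^2 = O(\lambda^{2M})$ near the origin (from the $M$ vanishing moments of $\psi$), makes $\mathbf{D}_{j,0}(\cdot;f)$ bounded and twice differentiable on $[-\pi,\pi]$. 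Writing $\gamma_j(\ell)=\PCov(W_{j,0},W_{j,\ell})$, Parseval's identity then gives $\sum_{\ell\in\Zset}\gamma_j(\ell)^2<\infty$, and Isserlis' theorem yields $\PCov(W_{j,0}^2,W_{j,\ell}^2)=2\gamma_j(\ell)^2$, so that the long-run variance of $\{W_{j,i}^2\}$ equals $2\sum_{\ell\in\Zset}\gamma_j(\ell)^2=\pi^{-1}\int_{-\pi}^{\pi}|\mathbf{D}_{j,0}(\lambda;f)|^2\,\rmd\lambda$, which is exactly the target limit in \eqref{consistency}.

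Consistency of $s_{j,n_j}^2$ I would then treat as a bias/variance analysis of a Bartlett lag-window estimator at frequency zero. I would decompose the error into three pieces: a deterministic truncation bias $\sum_{|\ell|\leq q(n_j)}w_\ell(q(n_j))\,2\gamma_j(\ell)^2-2\sum_{\ell\in\Zset}\gamma_j(\ell)^2$, which vanishes by dominated convergence using $q(n_j)\to\infty$ and the summability of $\gamma_j(\ell)^2$; a centred stochastic term $\sum_{|\ell|\leq q(n_j)}w_\ell(q(n_j))(\hat\gamma_j(\ell)-\PE\hat\gamma_j(\ell))$ whose variance I would bound by reducing the fourth-order cumulants of the Gaussian $W_{j,i}$ via Isserlis to products of $\gamma_j$'s, obtaining an overall bound of order $q(n_j)/n_j\to 0$; and a plug-in correction from substituting $\scalogram{j}$ for $\PE[W_{j,0}^2]$ in \eqref{eq:gamma}, easily shown to be $O_\prob(q(n_j)/n_j)$. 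Together these three pieces give \eqref{consistency}.

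For \eqref{convergence} I would first prove the unnormalised invariance principle $n_j^{-1/2}\sum_{i=1}^{[n_j t]}(W_{j,i}^2-\sigma_j^2)\cl \tau_j\,B(t)$ in $D([0,1])$, where $\tau_j^2=\pi^{-1}\int|\mathbf{D}_{j,0}(\lambda;f)|^2\,\rmd\lambda$. Each summand is a second Hermite polynomial of a centred stationary Gaussian sequence whose covariance sequence is in $\ell^2$, so finite-dimensional convergence follows from the Breuer--Major theorem (or, equivalently, the classical CLT for Toeplitz quadratic forms). Tightness in $D([0,1])$ follows from the standard fourth-moment bound $\PE\big|\sum_{i=a+1}^b(W_{j,i}^2-\sigma_j^2)\big|^4\leq C(b-a)^2$, again reducible to summability of $\gamma_j(\ell)^2$ via Isserlis. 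A final application of Slutsky to the continuous mapping $(x,s)\mapsto x/s$ on $D([0,1])\times(0,\infty)$ combines this with \eqref{consistency} and delivers \eqref{convergence}. The main obstacle I anticipate is the variance bound for the stochastic part of the Bartlett estimator: the weighted sample autocovariances must be controlled uniformly over lags $|\ell|\leq q(n_j)$ while $q(n_j)\to\infty$, and combining the Isserlis calculus with the rate conditions \eqref{eq:condition-q2} is where the bookkeeping is heaviest. The remainder of the argument is fairly routine once the spectral density of $\{W_{j,k}\}$ has been established to be smooth and square-integrable.
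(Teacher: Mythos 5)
Your proposal is correct, and its overall skeleton (identify the spectral density $\bdens[\phi,\psi]{j,0}{\lambda}{f}$ of the within-scale Gaussian sequence, use Isserlis/Gaussian cumulant calculus to get $\PCov(W_{j,0}^2,W_{j,\ell}^2)=2\gamma_j^2(\ell)$ and the long-run variance $\frac1\pi\int_{-\pi}^{\pi}\den{j}$, prove consistency of $s_{j,n_j}^2$, then finite-dimensional convergence plus tightness via a fourth-moment bound, and finish with Slutsky) coincides with the paper's. Where you genuinely diverge is in the two central technical ingredients. For \eqref{consistency}, the paper does not redo the bias/variance analysis of the Bartlett window: it invokes Theorem~3-i of \cite{giraitis:kokoszka:2003}, whose hypotheses are exactly the two facts you would use in your "heaviest bookkeeping" step, namely $\sum_{\ell}\gamma_j^2(\ell)<\infty$ and the uniform summability $\sup_h\sum_{r,s}|\Cum(W_{j,k}^2,W_{j,k+h}^2,W_{j,k+r}^2,W_{j,k+s}^2)|<\infty$, verified by expanding the fourth cumulant into products of $\gamma_j$'s; so your direct route proves more by hand but buys independence from that reference, while the paper's route localizes all the lag-window bookkeeping (including the $O_P(q/n_j)$ plug-in effect of replacing $\sigma_j^2$ by $\scalogram{j}$) inside a citation. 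For \eqref{convergence}, you invoke Breuer--Major for the Hermite-rank-2 functional, whereas the paper writes $\sum_i\mu_iS_{n_j}(t_i)$ as a diagonal quadratic form $\xi_{n_j}^TA_{n_j}\xi_{n_j}$ in the Gaussian vector $\xi_{n_j}$ and applies a quadratic-form CLT (Lemma~12 of \cite{moulines:roueff:taqqu:2008:aos}) under the condition $\rho(A_{n_j})\rho(\Gamma_{n_j})\to0$, with the limiting covariance of the increments computed explicitly; this is essentially the Toeplitz-quadratic-form variant you mention as an alternative, so the difference is one of packaging rather than substance. The tightness argument you sketch is the same as the paper's. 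One small point you should make explicit when applying Slutsky: division by $s_{j,n_j}$ requires the limit $\frac1\pi\int_{-\pi}^{\pi}\den{j}$ to be strictly positive, which the paper notes holds unless $f$ vanishes almost everywhere.
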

\begin{remark}
The fact that $X$ is Gaussian can be replaced by the more general assumption that the process $X$ is linear in the strong sense, under
appropriate moment conditions on the innovation. The proofs are then more involved, especially to establish the invariance principle which
is pivotal in our derivation.
\end{remark}
\begin{remark}
By allowing $q(n_j)$ to increase but at a slower rate than the number of
observations, the estimator of the averaged scalogram variance adjusts  appropriately for
general forms of short-range dependence among the scalogram coefficients. Of course, although the condition
\eqref{eq:condition-q2} ensure the consistency of $s_{j,n_j}^2$,
they provide little guidance in selecting a truncation lag $q(n_j)$.
When $q(n_j)$ becomes large relative to the sample size $n_j$, the finite-sample distribution of the test statistic
might be far from its asymptotic limit.
However $q(n_j)$ cannot be chosen too small since the
autocovariances beyond lag $q(n_j)$ may be significant  and should be included in the
weighted sum. Therefore, the truncation lag must be chosen ideally using some data-driven procedures.
\cite{andrews:1991}  and \cite{newey:west:1994} provide a data-dependent rule for choosing $q(n_j)$.
 These contributions suggest that selection of bandwidth according to an asymptotically optimal procedure tends to lead to more accurately sized test statistics than do traditional procedure
The methods suggested by \cite{andrews:1991} for selecting the bandwidth optimally is a plug-in approach.
This procedure  require the researcher to fit an  ARMA model of given order to provide a rough estimator of the spectral density and of its derivatives at zero frequencies (although misspecification of the order affects only optimality but not consistency). The minimax optimality of this method is  based on an asymptotic mean-squared error criterion and its behavior in the finite sample case is not precisely known.
The procedure outlined in \cite{newey:west:1994} suggests to bypass the modeling step, by using instead a pilot truncated kernel estimates of the spectral density and its derivative. We use these data driven procedures in the Monte Carlo experiments (these procedures have been implemented in the \emph{R-package sandwich}.
\end{remark}

\begin{proof}
Since $X$ is Gaussian and $\diffop^MX$ is centered, Eq.~(17) in \cite{moulines:roueff:taqqu:2007:jtsa} implies that $\{W_{j,k},\;k\in\Zset\}$ is a centered Gaussian process, whose distribution is determined by
$$
\gamma_j(h)=\cov(W_{j,0},W_{j,h})=\int_{-\pi}^{\pi}\bdens[\phi,\psi]{j,0}{\lambda}{f}e^{-i\lambda h}d\lambda\;.
$$
 From Corollary 1 and equation $(16)$ in \cite{moulines:roueff:taqqu:2007:jtsa}, we have
\begin{multline*}
\bdens[\phi,\psi]{j,0}{\lambda}{f}\\ =\sum\limits_{l=0}^{2^{j}-1}f\left(2^{-j}(\lambda+2l\pi)\right)2^{-j}\left|\tilde{H}_j(2^{-j}(\lambda+2l\pi))\right|^2\left|1-e^{-i2^{-j}(\lambda+2l\pi)}\right|^{2M},
\end{multline*}
where $\tilde{H}_j$ is a trigonometric polynomial.
Using that $$|1-e^{-i\xi}|^{2M}=|\xi|^{2M}\left|\frac{1-e^{-i\xi}}{i\xi}\right|^{2M}$$ and that $|\xi|^{2M}f(\xi)$ has a bounded second order derivative, we get that $\bdens[\phi,\psi]{j,0}{\lambda}{f}$ has also a bounded second order derivative. In particular, \begin{equation}\label{eq:DjProperties}
\int_{-\pi}^{\pi}\den{j}<\infty\quad\text{and}\quad
\sum_{s\in\Zset}|\gamma_j(s)|<\infty\;.
\end{equation}

The proof may be decomposed into 3 steps. We first prove the consistency of the Bartlett estimator of the variance of the squares of wavelet coefficients $ s^2_{j,n_j}$, that is (\ref{consistency}). Then we determine the asymptotic normality of the finite-dimensional distributions of the empirical scalogram, suitably centered and normalized. Finally a tightness criterion is proved, to establish the convergence in the
Skorokhod space. Combining these three steps completes the proof of \eqref{convergence}.

\noindent\textbf{Step~1}. Observe that, by the Gaussian property, $\cov(W^2_{j,0},W^2_{j,h})=2\gamma^2_j(h)$. Using Theorem 3-i in \cite{giraitis:kokoszka:2003}, the limit \eqref{consistency} follows from
\begin{equation}\label{autocov}
2\sum\limits_{h=-\infty}^{+\infty}\gamma^2_j(h)=\frac{1}{\pi}\int_{-\pi}^{\pi}\den{j}<\infty,
\end{equation}
and
\begin{equation}\label{cumulant}
\underset{h\in\Zset}{\sup}\sum\limits_{r,s=-\infty}^{+\infty}|\cum (h,r,s)|<\infty.
\end{equation}
where
\begin{equation}\label{eq:cum4}
\cum(h,r,s)=\Cum\big(W_{j,k}^2,W_{j,k+h}^2,W_{j,k+r}^2,W_{j,k+s}^2\big).
\end{equation}
Equation~(\ref{autocov}) follows from Parseval's equality and (\ref{eq:DjProperties}). Let us now prove (\ref{cumulant}). Using that the wavelet coefficients are Gaussian, we obtain
\begin{multline*}
\mathcal{K}(h,r,s)=12\big\{\gamma_j(h)\gamma_j(r-s)\gamma_j(h-r)\gamma_j(s)\\
+\gamma_j(h)\gamma_j(r-s)\gamma_j(h-s)\gamma_j(r)+\gamma_j(s-h)\gamma_j(r-h)\gamma_j(r)\gamma_j(s)\big\} \; .
\end{multline*}
The bound of the last term is given by
\begin{align*}
\underset{h\in\Zset}{\sup}\sum\limits_{r,s=-\infty}^{+\infty}|\gamma_j(s-h)\gamma_j(r-h)\gamma_j(r)\gamma_j(s)|
&\leq\underset{h}{\sup}\left(\sum\limits_{r=-\infty}^{+\infty}|\gamma_j(r)\gamma_j(r-h)|\right)^2
 \end{align*}
which is finite by the Cauchy-Schwarz inequality, since $\sum\limits_{r\in\Zset}\gamma^2_j(r)<\infty$.\\
 Using $|\gamma_j(h)|<\gamma_j(0)$ and the Cauchy-Schwarz inequality, we have
\begin{align*}
\underset{h\in\Zset}{\sup}\sum\limits_{r,s=-\infty}^{+\infty}|\gamma_j(h)\gamma_j(r-s)\gamma_j(h-r)\gamma_j(s)|\leq \gamma_j(0)\sum\limits_{u\in\Zset}\gamma^2_j(u)\sum\limits_{s\in\Zset}|\gamma_j(s)|,
\end{align*}
and the same bound applies to
\begin{align*} \underset{h\in\Zset}{\sup}\sum\limits_{r,s=-\infty}^{+\infty}|\gamma_j(h)\gamma_j(r-s)\gamma_j(h-s)\gamma_j(r)|.
\end{align*}
Hence,we have (\ref{cumulant}) by  (\ref{eq:DjProperties}), which achieves the proof of Step~1.

\noindent\textbf{Step~2}.
 Let us define
 \begin{equation}
 \label{eq:partial-sum}
 S_{n_j}(t)=\frac{1}{\sqrt{n_j}}\sum\limits_{i=1}^{\pent{n_jt}} (W_{j,i}^2-\sigma^2_j),
 \end{equation}
 where $\sigma_j^2=\PE(W_{j,i}^2)$, and $\pent{x}$ is the entire part of $x$.
Step~2 consists in proving that for $0\leq t_1\leq\ldots\leq t_k\leq1$, and $\mu_1,\ldots,\mu_k \in \Rset,$
\begin{equation} \label{fidi} \sum\limits_{i=1}^k\mu_iS_{n_j}(t_i)\cl\mathcal{N}\left(0,\frac{1}{\pi}\int_{-\pi}^{\pi}\den{j}\times
\PVar\left(\sum\limits_{i=1}^k\mu_iB(t_i)\right)\right).
\end{equation}
Observe that
\begin{align*}
\sum\limits_{i=1}^k\mu_iS_{n_j}(t_i)&=\frac{1}{\sqrt{n_j}}\sum\limits_{i=1}^k\mu_i\sum\limits_{l=1}^{n_j}(W_{j,l}^2-\sigma^2_j)\1_{\{l\leq\pent{n_jt_i}\}}\\
&= \sum\limits_{l=1}^{n_j}W_{j,l}^2a_{l,n}-E\left(\sum\limits_{l=1}^{n_j}W_{j,l}^2a_{l,n}\right)\\
&=\xi^T_{n_j}A_{n_j}\xi_{n_j}\;,
\end{align*}
where we set $a_{l,n}=\frac{1}{\sqrt{n_j}}\sum\limits_{i=1}^k\mu_i\1_{\{l\leq\pent{n_jt_i}\}}$, $\xi_{n_j}=\big(W_{j1},\ldots,W_{jn_{j}}\big)^T$ and $A_{n_j}$ is the diagonal matrix with diagonal entries $(a_{1,n_j},\ldots,a_{n_j,n_j})$.
Applying  \cite[Lemma~12]{moulines:roueff:taqqu:2008:aos}, (\ref{fidi}) is obtained by proving that, as $n_j\to\infty$,
\begin{align}\label{eq:rhorho}
&\rho(A_{n_j})\rho(\Gamma_{n_j})\to0 \\
\label{eq:varLimQuad}
&\PVar\left(\sum\limits_{i=1}^k\mu_iS_{n_j}(t_i)\right)\to\frac1\pi\int_{-\pi}^{\pi}\den{j}
\times \PVar\left(\sum\limits_{i=1}^k\mu_i\big(B(t_i)\big)\right)\;,
\end{align}
where $\rho(A)$ denote the spectral radius of the matrix $A$, that is, the maximum modulus of its eigenvalues and
$\Gamma_{n_j}$ is the covariance matrix of $\xi_{n_j}$.
The process $(W_{j,i})_{\{i=1,\dots,n_j\}}$ is stationary with spectral density $\bdens[\phi,\psi]{j,0}{.}{f}.$ Thus, by Lemma~2 in \cite{moulines:roueff:taqqu:2007:jtsa} its covariance matrix $\Gamma_{n_j}$ satisfies $\rho(\Gamma_{n_j})\leq 2\pi\underset{\lambda}{\sup}\bdens[\phi,\psi]{j,0}{\lambda}{f}.$
Furthermore, as $n_j\to\infty$,
\begin{align*} \rho(A_{n_j})=
\underset{1\leq l\leq n_j}{\max}\frac{1}{\sqrt{n_j}}\left|\sum\limits_{i=1}^k\mu_i\1_{\{l\leq\pent{n_jt_i}\}}\right|\leq n_j^{-1/2}\sum\limits_{i=1}^k|\mu_i|\rightarrow 0,
\end{align*}
and (\ref{eq:rhorho}) holds.
We now prove (\ref{eq:varLimQuad}). Using that $B(t)$ has variance $t$ and
independent and stationary increments, and that these properties characterize its covariance function, it is sufficient to show that,
for all $t\in[0,1]$, as $n_j\rightarrow \infty$,
\begin{equation}\label{eq:VarInt}
\PVar\left(S_{n_j}(t)\right)\to t\int_{-\pi}^{\pi}\den{j}\;,
\end{equation}
and for all $0\leq r\leq s\leq t\leq 1$, as $n_j\rightarrow \infty$,
\begin{equation}\label{eq:CovIntsr}
\cov\left(S_{n,j}(t)-S_{n,j}(s),S_{n,j}(r)\right)\to 0\;.
\end{equation}
For any sets $A,B\subseteq[0,1]$, we set
$$
V_{n_j}(\tau,A,B)=\frac{1}{n_j}\sum\limits_{k\geq1}\1_A((k+\tau)/n_j)\1_B(k/n_{j})\;.
$$
For all $0\leq s,t\leq 1$, we have
\begin{align*}
 \cov\left(S_{n_j}(t),S_{n_j}(s)\right)& =\frac{1}{n_j}\somme{i=1}{\pent{n_jt}}\somme{k=1}{\pent{n_{j}s}}\cov(W^2_{j,i},W^2_{j,k})\\
&= 2\sum\limits_{\tau\in\Zset}\gamma^2_{j}(\tau)V_{n_j}(
\tau,]0,t],]0,s])\;.
\end{align*}
The previous display applies to the left-hand side of (\ref{eq:VarInt}) when $s=t$ and for $0\leq r\leq s\leq t\leq 1$, it yields
$$
\cov\left(S_{n_j}(t)-S_{n_j}(s),S_{n_j}(r)\right)=2\sum\limits_{\tau\in\Zset}\gamma^2_{j}(\tau)V_{n_j}(\tau,]s,t],]0,r])\;.
$$
Observe that for all $A,B\subseteq[0,1]$,
$\underset{\tau}{\sup}\left|V_n(j,\tau,A,B\right|\leq \frac{k}{n_j}\leq 1$. Hence, by dominated convergence, the limits in (\ref{eq:VarInt}) and (\ref{eq:CovIntsr}) are obtained by computing the limits of $V_n(j,\tau,]0,t],]0,t])$ and $V_n(j,\tau,]s,t],]0,r])$ respectively.
We have  for any $\tau\in\Zset$, $t>0$, and $n_j$ large enough,
$$
\sum_{k\geq1}\1_{\{\frac{k+\tau}{n_j}\in]0,t]\}}\1_{\{\frac{k}{n_{j}}\in]0,t]\}}=\left\{(n_jt\wedge n_jt-\tau)\right\}_+=n_jt-\tau_+\;.
$$
Hence, as $n_j\to\infty$, $V_{n_j}(\tau,]0,t],]0,t])\to t$ and, by (\ref{autocov}), (\ref{eq:VarInt}) follows.
We have for any $\tau\in\Zset$ and $0< r\leq s \leq t$,
\begin{align*}
\sum\limits_{k\geq1}\1_{\{\frac{k+\tau}{n_j}\in]s,t]\}}\1_{\{\frac{k}{n_{j}}\in]0,r]\}}&=\left\{(n_jr\wedge \{n_jt-\tau\})-(0\vee \{n_js-\tau\})\right\}_+\\
&=(n_jr-n_js+\tau)_+ \to \1_{\{r=s\}} \; \tau_+ \;,
\end{align*}
where the last equality holds for $n_j$ large enough and the limit as $n_j\to\infty$. Hence $V_{n_j}(\tau,]s,t],]0,r])\to0$ and (\ref{eq:CovIntsr}) follows, which achieves Step~2.

\noindent\textbf{Step~3}. We now prove the tightness of $\{S_{n_j}(t),\;t\in[0,1]\}$ in the Skorokhod metric space. By Theorem 13.5 in \cite{billingsley:1999}, it is sufficient to prove that %
for all $0\leq r\leq s \leq t$,
$$
\PE\big[|S_{n_j}(s)-S_{n_j}(r)|^{2}|S_{n_j}(t)-S_{n_j}(s)|^{2}\big]\leq C |t-r|^{2}\;,
$$
where $C>0$ is some constant independent of $r$, $s$, $t$ and $n_j$.
We shall prove that, for all $0\leq r \leq t$,
\begin{equation}\label{eq:tightBiling}
\PE\big[|S_{n_j}(t)-S_{n_j}(r)|^{4}\big]\leq C_1 \{n_j^{-1}(\pent{n_jt}-\pent{n_jr})\}^{2}\;.
\end{equation}
By the Cauchy-Schwarz inequality, and using 
that, for $0\leq r\leq s \leq t$,
$$
n_j^{-1}(\pent{n_jt}-\pent{n_js})\times n_j^{-1}(\pent{n_js}-\pent{n_jr})\leq 4(t-r)^2 \;,
$$
the criterion (\ref{eq:tightBiling}) implies the previous criterion. Hence the tightness follows from (\ref{eq:tightBiling}), that we now prove.
We have, for any $\mathbf{i}=(i_1,\dots,i_4)$,
\begin{multline*}
\PE\left[\prod_{k=1}^4(W_{j,i_k}^2-\sigma_j^2)\right]=\Cum(W_{j,i_1}^2,\dots,W_{j,i_4}^2)+\PE(W_{j,i_1}^2,W_{j,i_2}^2)\PE[W_{j,i_3}^2,W_{j,i_4}^2]\\
\hspace{0.2cm}+\PE[W_{j,i_1}^2,W_{j,i_3}^2]\PE[W_{j,i_2}^2W_{i_4}^2]+\PE[W_{j,i_1}^2W_{i_4}^2]\PE[W_{j,i_2}^2W_{i_3}^2] \;.
\end{multline*}
It follows that, denoting for $0\leq r\leq t\leq 1$,
\begin{multline*}
\PE\left[\left|S_{n_j}(t)-S_{n_j}(r)\right|^4\right]=
\frac{1}{n_j^2}\sum_{\mathbf{i}\in A_{r,t}^4}
\Cum(W_{j,i_1}^2,\dots,W_{j,i_4}^2)\\
+\frac{3}{n_j^2}\left(\sum_{\mathbf{i}\in A_{r,t}^2}\PE[W_{j,i_1}W_{j,i_2}]\right)^2\;
\end{multline*}
where $A_{r,t}=\{\pent{n_jr}+1,\dots,\pent{n_jt}\}$.
Observe that
$$
0\leq \frac{1}{n_j}\sum_{\mathbf{i}\in A_{r,t}^2}\PE[W_{j,i_1}^2W_{j,i_2}^2]\leq 2\sum_{\tau\in\Zset}\gamma_j^2(\tau) \times n_j^{-1}(\pent{n_jt}-\pent{n_jr}) \;.
$$
Using that, by (\ref{eq:cum4}), $\Cum(W_{j,i_1}^2,\dots,W_{j,i_4}^2)=\cum(i_2-i_1,i_3-i_1,i_4-i_1)$, we have
\begin{align*}
\sum_{\mathbf{i}\in A_{r,t}^4}\left|\Cum\big(W_{j,i_1}^2,\dots,W_{j,i_4}^2\big)\right|
&\leq(\pent{n_jt}-\pent{n_jr})\sum_{h,s,l=\pent{n_jr}-\pent{n_jt}+1}^{\pent{n_jt}-\pent{n_jr}-1}\left|\cum(h,s,l)\right|\\
&\leq2(\pent{n_jt}-\pent{n_jr})^2\;\;\underset{h\in\Zset}{\sup}\sum\limits_{r,s=-\infty}^{+\infty}|\cum (h,r,s)| \;.
\end{align*}
The last three displays and (\ref{cumulant}) imply (\ref{eq:tightBiling}), which proves the tightness.

Finally, observing that the variance (\ref{autocov}) is positive, unless $f$ vanishes almost everywhere, the convergence (\ref{convergence}) follows from
Slutsky's lemma and the three previous steps.
\end{proof}
\subsection{The multiple-scale case}
\label{sec:multiple-scale}
The results above can  be extended to test simultaneously changes in wavelet variances occurring simultaneously at multiple time-scales.
To construct a multiple scale test, consider  the  \emph{between-scale} process
\begin{equation}\label{eq:betweenscaleProc}
\{[\dwt^X_{j,k},\,\bdwt^X_{j,k}(j-j')^T]^T \}_{k\in \Zset}\;,
\end{equation}
where the superscript $^T$ denotes the transpose and $\bdwt^X_{j,k}(\dj)$, $\dj=0,1,\dots,j$, is defined as follows:
\begin{equation}\label{eq:Defbd}
\bdwt^X_{j,k}(\dj) \eqdef \left[\dwt^X_{j-\dj,2^{\dj}k},\,\dwt^X_{j-\dj,2^{\dj}k+1},\,\dots,
  \dwt^X_{j-\dj,2^{\dj}k+2^{\dj}-1}\right]^T.
\end{equation}
It is a $2^\dj$-dimensional vector of wavelet coefficients at scale $j'=j-\dj$ and involves all
possible translations of the position index $2^{\dj}k$ by $\dk=0,1,\dots,2^{\dj}-1$.
The index $\dj$ in~(\ref{eq:Defbd}) denotes the scale difference $j-j'\geq0$ between the finest scale $j'$ and the coarsest
scale $j$. Observe that $\bdwt^X_{j,k}(0)$ ($\dj=0$) is the scalar $\dwt^X_{j,k}$.
It is shown in \cite[Corollary~1]{moulines:roueff:taqqu:2007:jtsa} that, when $\diffop^M X$ is covariance stationary, the between scale process $\{[\dwt^X_{j,k},\,\bdwt^X_{j,k}(j-j')^T]^T\}_{k\in \Zset}$ is also covariance stationary. Moreover, for all $0 \leq \dj \leq j$, the \emph{between scale covariance matrix} is
defined as
\begin{equation}\label{density}
\PCov\left(\dwt^X_{j,0},\bdwt^X_{j,k}(\dj) \right) = \int_{-\pi}^\pi \rme^{\rmi\lambda k} \,
\bdens[\phi,\psi]{j,\dj}{\lambda}{f} \, \rmd \lambda \; ,
\end{equation}
where $\bdens[\phi,\psi]{j,\dj}{\lambda}{f}$ is the cross-spectral density function of the between-scale process
given by (see~\cite[Corollary~1]{moulines:roueff:taqqu:2007:jtsa})
\begin{multline}
\label{eq:definitionfj}
\bdens[\phi,\psi]{j,\dj}{\lambda}{f} \eqdef
\sum_{l=0}^{2^{j}-1}  \be_{\dj}(\lambda+2l\pi)\, f(2^{-j}(\lambda+2l\pi))\,2^{-j/2} H_{j}(2^{-j}(\lambda+2l\pi))\\
\times 2^{-(j-\dj)/2}\overline{H_{j-\dj}(2^{-j}(\lambda+2l\pi))} \eqsp,
\end{multline}
where for all $\xi\in\Rset$,
$$
\be_\dj(\xi) \eqdef 2^{-\dj/2}\, [1, \rme^{-\rmi2^{-\dj}\xi}, \dots, \rme^{-\rmi(2^{\dj}-1)2^{-\dj}\xi}]^T\;.
$$
The case $\dj=0$ corresponds to the spectral density  of the \emph{within-scale} process $ \{ \dwt_{j,k} \}_{k \in \Zset}$
given in \eqref{density1scale}. Under the null hypothesis that $X$ is $K$-th order stationary, a \emph{multiple scale} procedure aims
at testing that the scalogram in a range satisfies
\begin{equation}
\label{eq:test1}
\mathcal{H}_0 : \sigma^2_{j,1}=\dots=\sigma^2_{j,n_j},  \text{for all} \;\; j\in \{J_1,J_1+1,\dots,J_2\}
\end{equation}
where $J_1$ and $J_2$ are the \textit{finest} and the \emph{coarsest} scales included in the procedure, respectively.
The wavelet coefficients at different scales are not uncorrelated so  that both the \emph{within-scale} and the
\emph{between scale} covariances need to be taken into account.
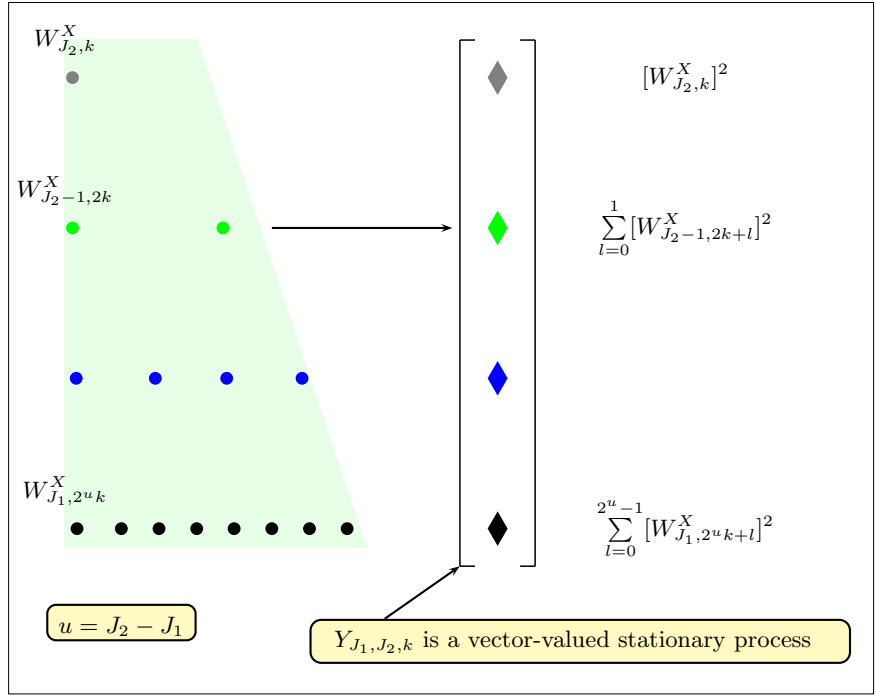
\begin{figure}[htbp]
\begin{pspicture}(0,-1.2)(11.5,8.2)
\psframe[linewidth=.001cm](0,-1.2)(11.5,8)
\pspolygon[fillstyle=solid,fillcolor=green!10!white,linecolor=green!10!white](0.75,0.75) (4.75,0.75) (2.5,7.5)(.75,7.5)
\psdots[linecolor=black,linewidth=.05cm](.91,1)(1.5,1)(2,1)(2.5,1)(3,1)(3.5,1)(4,1)(4.5,1)
\rput (.75,1.5){$W^X_{J_1,2^{u}k}$}
\rput (9,1){$\sum\limits_{l=0}^{2^{u}-1}[W^X_{J_1,2^{u}k+l}]^2$}
\psdots[linecolor=blue,linewidth=.05cm] (.9,3) (1.95,3)(2.9,3) (3.9,3)
\psdots [linecolor=green,linewidth=.05cm](.85,5) (2.85,5)
\rput (.75,5.5){$W^X_{J_2-1,2k}$}
\rput (9,5){$\sum\limits_{l=0}^1[W^X_{J_2-1,2k+l}]^2$}
\psdots [linecolor=gray,linewidth=.05cm](.85,7)
\rput (.75,7.5){$W^X_{J_2,k}$}
\rput (9,7){$[W^X_{J_2,k}]^2$}

\psline{->}(3.5,5)(5.9,5)
\psline[linewidth=.02] (6,0.5)(6,7.5)
\psline[linewidth=.02] (6,0.5) (6.2,0.5)
\psline[linewidth=.02] (6,7.5) (6.2,7.5)
\psline[linewidth=.02] (7,0.5)(7,7.5)
\psline[linewidth=.02] (7,0.5)(6.8,0.5)
\psline[linewidth=.02] (7,7.5)(6.8,7.5)

\psdots [dotstyle=diamond*,linecolor=gray,linewidth=.12](6.5,7)
\psdots [dotstyle=diamond*,linecolor=green,linewidth=.12](6.5,5)
\psdots [dotstyle=diamond*,linecolor=blue,linewidth=.12](6.5,3)
\psdots [dotstyle=diamond*,linecolor=black,linewidth=.12](6.5,1)
\psframe[framearc=.5,fillstyle=solid,fillcolor=yellow!30!white] (4,-.8)(11.2,-0.2)
\rput(7.5,-0.5){$Y_{J_1,J_2,k}$ is a vector-valued stationary process}
\psline{->}(5,-.2)(6,.5)
\psframe[framearc=.5,fillstyle=solid,fillcolor=yellow!30!white](.5,-.5)(2.5,0)
\rput(1.5,-.3){$\tiny{u=J_2-J_1}$}
\end{pspicture}
\caption{Between scale stationary process.}
\label{fig5}
\end{figure}

As before, we use a  CUSUM statistic in the wavelet domain. However, we now use multiple scale vector statistics.
Consider the following process
$$Y_{J_1,J_2,i}=\left(W^2_{J_2,i},\sum_{u=1}^{2}W^2_{J_2-1,2(i-1)+u},\dots,
\sum_{u=1}^{2^{(J_2-J_1)}}W^2_{J_1,2^{(J_2-J_1)}(i-1)+u}\right)^T \eqsp.$$
\textsl{The Bartlett estimator} of \textsl{the covariance matrix} of the square wavelet's coefficients for scales $\{J_1,\dots,J_2\}$ is the $(J_2-J_1+1)\times (J_2-J_1+1)$ symmetric definite
positive matrix $\hat{\Gamma}_{J_1,J_2}$ given by~:
\begin{align}\label{consistencyjj'}
&\hat{\Gamma}_{J_1,J_2}= \sum\limits_{\tau=-q(n_{J_2})}^{q(n_{J_2})}w_\tau[q(n_{J_2})] \hat{\gamma}_{J_1,J_2}(\tau) \eqsp, \;\; \text{where}\\
& \hat{\gamma}_{J_1,J_2}(\tau) = \frac{1}{n_{J_2}}  \sum\limits_{i,i+\tau=1}^{n_{J_2}}\left(Y_{J_1,J_2,i}-\bar{Y}_{J_1,J_2}\right)\left(Y_{J_1,J_2,i+\tau}-\bar{Y}_{J_1,J_2}\right)^T \eqsp.
\end{align}
where $\bar{Y}_{J_1,J_2}=\frac{1}{n_{J_2}}\sum\limits_{i=1}^{n_{J_2}}Y_{J_1,J_2,i}$ 

Finally, let us define  the vector of partial sum from scale $J_1$ to scale $J_2$ as
\begin{equation}\label{partial}
S_{J_1,J_2}(t)=\frac{1}{\sqrt{n_{J_2}}}\left[\sum_{i=1}^{\pent{n_jt}}W^2_{j,i}\right]_{j=J_1,\dots,J_2}\;.
\end{equation}
\begin{theorem} \label{theo:main-results-multiplescale}
  Under the assumptions of Theorem \ref{theo:main-result-singlescale}, we have, as $n\to\infty$,
\begin{equation}\label{eq:consistent}
\hat{\Gamma}_{J_1,J_2}=\Gamma_{J_1,J_2}+O_P\left(\frac{q(n_{J_2})}{n_{J_2}}\right)+O_P(q^{-1}(n_{J_2})),
\end{equation}
where $\Gamma_{J_1,J_2}(j,j')=\sum_{h\in\Zset}\cov(Y_{j,0},Y_{j',h})$, with $1\leq j,j'\leq J_2-J_1+1$ and,
\begin{equation}\label{multifidi}
\hat{\Gamma}_{J_1,J_2}^{-1/2} \left(S_{J_1,J_2}(t)-\PE\left[S_{J_1,J_2}(t)\right]\right) \cl B(t)
= \left(B_{J_1}(t),\ldots,B_{J_2}(t)\right),
\end{equation}
in $D^{J_2-J_1+1}[0,1]$, where $\{B_j(t)\}_{j=J_1,\ldots,J_2}$ are independent \textsl{Brownian motions}.
\end{theorem}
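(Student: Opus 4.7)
The plan is to extend the three-step argument used for Theorem~\ref{theo:main-result-singlescale} to the vector-valued setting, and then to deduce the independence of the limiting Brownian motions from the whitening effect of $\hat{\Gamma}_{J_1,J_2}^{-1/2}$. The key structural input is that the between-scale process $\{Y_{J_1,J_2,i}\}_{i\in\Zset}$ is a centered stationary Gaussian vector process whose covariance function is summable; this follows from the representation~\eqref{eq:definitionfj} of the cross-spectral density $\bdens[\phi,\psi]{j,\dj}{\lambda}{f}$, together with the fact that $|\xi|^{2M}f(\xi)$ has a bounded second derivative, exactly as in the derivation of~\eqref{eq:DjProperties}.

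First, I would establish~\eqref{eq:consistent} entrywise. Each coordinate $\hat{\Gamma}_{J_1,J_2}(j,j')$ is the Bartlett estimator of the long-run covariance between the $j$-th and $j'$-th components of $Y_{J_1,J_2,i}$, which themselves are sums of squares of wavelet coefficients at two (possibly different) scales of the between-scale process. Using the Gaussian property, all fourth-order cumulants of these sums of squares reduce to products of between-scale covariances of the $\dwt_{j,k}$'s; by the boundedness and smoothness of $\bdens[\phi,\psi]{j,\dj}{\cdot}{f}$, these cross-covariances are absolutely summable and square-summable, so the single-scale cumulant bound (\ref{cumulant}) extends directly. Applying the vector version of Theorem~3 in \cite{giraitis:kokoszka:2003} together with the standard Bartlett bias/variance decomposition gives the rate $O_P(q(n_{J_2})/n_{J_2}) + O_P(q^{-1}(n_{J_2}))$.

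For the functional central limit theorem, I would first check finite-dimensional convergence via the Cramér--Wold device. Any linear combination $\sum_i \sum_j \mu_{i,j}\,[S_{J_1,J_2}(t_i)]_j$ is, after centering, a quadratic form $\xi_n^T A_n \xi_n$ in the Gaussian vector $\xi_n$ formed by stacking all wavelet coefficients at scales $J_1,\dots,J_2$. The covariance matrix of $\xi_n$ has spectral radius bounded by $2\pi\sup_{j,\dj,\lambda}\|\bdens[\phi,\psi]{j,\dj}{\lambda}{f}\|$, hence is uniformly bounded; meanwhile $\rho(A_n)=O(n^{-1/2})$ exactly as in Step~2 of Theorem~\ref{theo:main-result-singlescale}. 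Hence \cite[Lemma~12]{moulines:roueff:taqqu:2008:aos} applies, and a direct covariance computation using the summability of the within- and between-scale autocovariances yields that the limiting Gaussian vector process $S_{J_1,J_2}(t)-\PE[S_{J_1,J_2}(t)]$ has covariance $(t\wedge s)\,\Gamma_{J_1,J_2}$ and independent, stationary increments. Tightness in $D^{J_2-J_1+1}([0,1])$ is obtained componentwise by the fourth-moment bound~\eqref{eq:tightBiling}, since the bounds on $\cum(h,r,s)$ and on $\PE[W_{j,i_1}^2W_{j',i_2}^2]$ carry over to any two scales $j,j'\in\{J_1,\dots,J_2\}$.

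Combining these steps gives $S_{J_1,J_2}(\cdot)-\PE[S_{J_1,J_2}(\cdot)]\Rightarrow \Gamma_{J_1,J_2}^{1/2}\,W(\cdot)$ in $D^{J_2-J_1+1}([0,1])$, where $W$ is a standard vector Brownian motion. Since $\hat{\Gamma}_{J_1,J_2}\to \Gamma_{J_1,J_2}$ in probability by~\eqref{eq:consistent} and $\Gamma_{J_1,J_2}$ is positive definite (the single-scale variances are positive by the end of the preceding proof, and positive-definiteness across scales follows from the injectivity of the wavelet representation when $f$ does not vanish), the continuous mapping theorem gives a continuous square-root inverse, so Slutsky's lemma delivers $\hat{\Gamma}_{J_1,J_2}^{-1/2}(S_{J_1,J_2}(\cdot)-\PE[S_{J_1,J_2}(\cdot)])\Rightarrow W(\cdot)$, whose components $B_{J_1},\dots,B_{J_2}$ are independent standard Brownian motions. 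The main technical obstacle is the joint cumulant/covariance bookkeeping at Step~1: one must check that the cross-scale Gaussian fourth-cumulant terms are uniformly summable in the sense of~\eqref{cumulant} for every pair $(j,j')\in\{J_1,\dots,J_2\}^2$, which in turn hinges on the uniform boundedness and smoothness of all $\bdens[\phi,\psi]{j,\dj}{\cdot}{f}$ granted by the assumptions on $f$ and~\ref{item:Wreg}--\ref{item:MIM}.
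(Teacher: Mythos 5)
Your proposal is essentially the paper's intended argument: the paper omits the proof of this theorem, stating only that it follows the same line as the proof of Theorem~\ref{theo:main-result-singlescale}, and your vector-valued extension of the three steps (cumulant summability for the Bartlett estimator, Cram\'er--Wold with the quadratic-form CLT of \cite[Lemma~12]{moulines:roueff:taqqu:2008:aos}, componentwise tightness) followed by Slutsky's lemma with the consistency of $\hat{\Gamma}_{J_1,J_2}$ is precisely that line. No substantive discrepancy to report.
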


The proof of this result follows the same line as the proof of Theorem~\ref{theo:main-result-singlescale} and is therefore
omitted.
\section{Test statistics}
\label{sec:test-statistic}
Under the assumption of Theorem \ref{theo:main-result-singlescale}, the statistics
\begin{equation}
\label{eq:cusum-multiple}
 T_{J_1,J_2}(t)  \eqdef  \left(S_{J_1,J_2}(t)-tS_{J_1,J_2}(1)\right)^T \hat{\Gamma}_{J_1,J_2}^{-1}
\left(S_{J_1,J_2}(t)-tS_{J_1,J_2}(1)\right)
\end{equation}
converges in weakly in the  Skorokhod space $D([0,1])$
\begin{equation}
\label{eq:convergence-test}
 T_{J_1,J_2}(t) \cl \sum_{\ell=1}^{J_2-J_1-1} \left[ B_\ell^0(t) \right]^2
\end{equation}
where  $t \mapsto (B_1^0(t), \dots, B_{J_2-J_1+1}^0(t))$ is a vector of $J_2-J_1+1$ independent Brownian bridges

For any continuous function $F : D[0,1] \to  \Rset$, the continuous mapping Theorem implies that
\[
F[T_{J_1,J_2}(\cdot)] \cl F\left[ \sum_{\ell=1}^{J_2-J_1-1} \left[ B_\ell^0(\cdot) \right]^2 \right] \eqsp.
\]
We may for example apply either integral or max functionals, or weighted versions of these.
A classical example of integral function is  the so-called Cramér-Von Mises functional given by
\begin{equation}
\label{eq:definition-CVM}
\mathrm{CVM}(J_1,J_2) \eqdef \int_0^1 T_{J_1,J_2}(t) \rmd t \eqsp,
\end{equation}
which converges to $C(J_2-J_1+1)$ where for any integer $d$,
\begin{equation}
\label{eq:definition-C(d)}
C(d) \eqdef \int_0^1 \sum_{\ell=1}^d \left[B^0_\ell(t)\right]^2 \rmd t \eqsp.
\end{equation}
The test rejects the null hypothesis when
$\mathrm{CVM}_{J_1,J_2} \geq c(J_2-J_1+1,\alpha)$, where $c(d,\alpha)$ is the $1-\alpha$th quantile of the distribution of $C(d)$.
The distribution of the random variable $C(d)$
has been derived by \cite{kiefer:1959} (see also \cite{carmona:petit:pitman:yor:1999} for more recent references).
It holds that, for $x > 0$,
\[
\prob\left(C(d) \leq x \right) = \frac{2^{(d+1)/2}}{\pi^{1/2} x^{d/4}} \sum_{j=0}^\infty \frac{\Gamma(j+d/2)}{j! \Gamma(d/2)} \rme^{-(j+d/4)^2/x} \mathrm{Cyl}_{(d-2)/2} \left( \frac{2 j + d/2}{x^{1/2}} \right)
\]
where $\Gamma$ denotes the gamma function and $\mathrm{Cyl}$ are the parabolic cylinder functions.
The quantile of this distribution are given in table \ref{tab:quantile-CVM} for different values of $d=J_2-J_1+1$.
\begin{table}[t]
\centering
\begin{tabular}{|c||c|c|c|c|c|c|}\hline
Nominal S. & $d=1$ & $d=2$ & $d=3$ & $d=4$ & $d=5$ & $d=6$\\ \hline \hline
0.95 &0.4605& 0.7488 &1.0014& 1.2397 & 1.4691 & 1.6848 \\ \hline
0.99 &0.7401& 1.0721 &1.3521& 1.6267 & 1.8667 & 2.1259 \\\hline
\end{tabular}
\caption{Quantiles of the distribution $C(d)$ (see~(\ref{eq:definition-C(d)})) for different values of $d$}
\label{tab:quantile-CVM}
\end{table}
It is also possible to use the max. functional leading to an analogue of the Kolmogorov-Smirnov statistics,
\begin{equation}
\label{eq:definition-KSM}
\mathrm{KSM}(J_1,J_2) \eqdef  \sup_{0 \leq t \leq 1} T_{J_1,J_2}(t)
\end{equation}
which converges to $D(J_2-J_1+1)$ where for any integer $d$,
\begin{equation}
\label{eq:definition-D(d)}
D(d) \eqdef \sup_{0 \leq t \leq 1} \sum_{\ell=1}^d \left[ B_\ell^0(t) \right]^2 \eqsp.
\end{equation}
The test reject the null hypothesis when
$\mathrm{KSM}_{J_1,J_2} \geq \delta(J_2-J_1+1,\alpha)$, where $\delta(d,\alpha)$ is the $(1-\alpha)$-quantile of $D(d)$. The
distribution of $D(d)$ has again be derived by \cite{kiefer:1959} (see also \cite{pitman:yor:1999} for more recent references).
It holds that, for $x > 0$,
\[
\prob \left( D(d) \leq x \right)  = \frac{2^{1+(2-d)/2}}{ \Gamma(d/2) a^d} \sum_{n=1}^\infty \frac{j_{\nu,n}^{2\nu}}{J^2_{\nu+1}(j_{\nu,n})} \exp\left( - \frac{j_{\nu,n}^2}{2 x^2} \right) \eqsp,
\]
where $0 < j_{\nu,1} < j_{\nu,2} < \dots$ is the sequence of positive zeros of $J_\nu$, the Bessel function of index
$\nu = (d-2) /2$.
The quantiles of this distribution are given in Table \ref{tab:quantile-KSM}.
\begin{table}[t]
\centering
\begin{tabular}{|c||c|c|c|c|c|c|}\hline
$d$  & 1    & 2      &    3 &      4 &     5 &     6   \\ \hline
0.95 &1.358& 1.58379 &1.7472&  1.88226 & 2.00 & 2.10597 \\ \hline
0.99 &1.627624& 1.842726 &2.001& 2.132572 & 2.24798 & 2.35209 \\\hline
\end{tabular}
\caption{Quantiles of the distribution $D(d)$ (see~(\ref{eq:definition-D(d)}))
for different values of $d$.}
\label{tab:quantile-KSM}
\end{table}
\section{Power of the W2-CUSUM statistics}
\label{sec:power}
\subsection{Power of the test in single scale case}
In this section we investigate the power of the test. A minimal requirement is to establish that the test procedure
is pointwise consistent in a presence of a breakpoint, \ie\ that under a fixed alternative, the probability of detection converges to one as the sample size goes to infinity. We must therefore first define such alternative. For simplicity, we will consider an alternative where the process exhibit a single breakpoint, though it is likely that the test does have power against more general class of alternatives.

The alternative  that we consider in this section is defined as follows. Let $f_1$ and $f_2$ be two given generalized
spectral densities and suppose that, at a given scale $j$, $\int_{-\pi}^{\pi}|H_j(\lambda)|^2 f_i(\lambda) \rmd \lambda < \infty$, $i=1,2$, and
\begin{equation}\label{eq:H1}
\int_{-\pi}^{\pi}|H_j(\lambda)|^2\left(f_1(\lambda)-f_2(\lambda)\right) \rmd \lambda\neq 0 \eqsp.
\end{equation}
Define by $(X_{l,i})_{l\in\Zset}$, $i=1,2$, be two Gaussian processes, defined on the same probability space, with generalized
spectral density $f_1$. We do not specify the dependence structure between these two processes, which can be arbitrary.
Let $\kappa \in ]0,1[$ be a breakpoint. We consider a sequence of Gaussian processes $(X_k^n)_{k \in \Zset}$, such that
\begin{equation}
\label{eq:definition:X:alternative}
\text{$X_k^{(n)}= X_{k,i}$ for $k \leq \pent{n\kappa}$ and $X_k^{(n)} = X_{k,2}$ for $k \geq \pent{n\kappa} +1$}\eqsp.
\end{equation}
\begin{theorem}\label{theo:power-single-case}
Consider $\{ X_k^n \}_{k \in \Zset}$ be a sequence of processes specified by \eqref{eq:H1} and \eqref{eq:definition:X:alternative}.
Assume that $q(n_j)$ is non decreasing and :
\begin{equation}
q(n_j)\to \infty \;\; \text{and}\;\; \frac{q(n_j)}{n_j}\to 0 \;\text{as}\; n_j\to \infty \eqsp.
\end{equation}
Then the statistic $T_{n_j}$ defined by (\ref{eq:cusum}) satisfies
\begin{equation}
\frac{\sqrt{n_j}}{\sqrt{2q(n_j)}}\sqrt{\kappa(1-\kappa)}(1+o_p(1)) \leq T_{n_j} \cp\infty \eqsp.
\end{equation}
\end{theorem}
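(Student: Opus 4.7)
The strategy is to exhibit a deterministic lower bound on $T_{n_j}$ that diverges at rate $\sqrt{n_j/q(n_j)}$, which forces $T_{n_j}\cp\infty$ since $q(n_j)/n_j\to0$. Write $k^{*}\eqdef\pent{n_j\kappa}$. Under the alternative, except for $O(2^j)$ boundary indices whose wavelet supports straddle the breakpoint $\pent{n\kappa}$ (a negligible contribution at fixed $j$), the coefficients $\{W_{j,i}\}$ are Gaussian with $\PE[W_{j,i}^2]=\sigma_{j,1}^2\eqdef\int|H_j|^2f_1\,\rmd\lambda$ for $i\le k^{*}$ and $\PE[W_{j,i}^2]=\sigma_{j,2}^2\eqdef\int|H_j|^2f_2\,\rmd\lambda$ for $i>k^{*}$, and assumption~\eqref{eq:H1} is exactly $\diffopgain\eqdef\sigma_{j,1}^2-\sigma_{j,2}^2\neq0$.

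First I would analyze the numerator of $T_{n_j}$. Plugging $k=k^{*}$ into the CUSUM and taking expectations yields the explicit formula
\[
\PE\Bigl[\sum_{i=1}^{k^{*}}W_{j,i}^2-(k^{*}/n_j)\sum_{i=1}^{n_j}W_{j,i}^2\Bigr]=n_j\kappa(1-\kappa)\diffopgain+O(1) \eqsp.
\]
The stochastic fluctuation about this deterministic value is $O_p(\sqrt{n_j})$: on each of the two stationary segments $\{1,\dots,k^{*}\}$ and $\{k^{*}+1,\dots,n_j\}$ the centered partial sums of $W_{j,i}^2$ satisfy an invariance principle in the spirit of~\eqref{convergence}, so their standard deviation is of order $\sqrt{n_j}$. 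Since $\hat{k}_j$ maximizes~\eqref{est-change-time}, the numerator of~\eqref{eq:cusum} is bounded below by its value at $k=k^{*}$, giving $n_j\kappa(1-\kappa)|\diffopgain|(1+o_p(1))$.

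Second I would control the Bartlett denominator. The crucial fact is that under the alternative the sample autocovariance is dominated by a bias created by the mean shift. Decomposing $W_{j,i}^2=\mu_i+U_i$ with $\mu_i\in\{\sigma_{j,1}^2,\sigma_{j,2}^2\}$ and $U_i$ mean zero, one has $\hat{\sigma}_j^2\cp\bar{\sigma}^2\eqdef\kappa\sigma_{j,1}^2+(1-\kappa)\sigma_{j,2}^2$. A direct count of indices landing on each side of $k^{*}$ shows that for each lag $l$ with $l/n_j\to0$,
\[
\hat{\gamma}_j(l)=\kappa(1-\kappa)\diffopgain^{2}+o_p(1)\;,
\]
the leading term being $\kappa A^2+(1-\kappa)B^2=\kappa(1-\kappa)\diffopgain^{2}$ with $A=(1-\kappa)\diffopgain$, $B=-\kappa\diffopgain$, while the $\mu U$ and $UU$ contributions are $o_p(1)$ by the summability of the two segment covariances and the conditions on $q(n_j)$. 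Using $\sum_{l=1}^{q(n_j)}w_l(q(n_j))=q(n_j)/2$, this summation into~\eqref{bartlett estimator} yields $s_{j,n_j}^2=q(n_j)\,\kappa(1-\kappa)\,\diffopgain^{2}(1+o_p(1))$.

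Combining these two steps,
\[
T_{n_j}\;\ge\;\frac{n_j\,\kappa(1-\kappa)|\diffopgain|(1+o_p(1))}{n_j^{1/2}\,\sqrt{q(n_j)\,\kappa(1-\kappa)\,\diffopgain^{2}}\,(1+o_p(1))}=\sqrt{\frac{n_j}{q(n_j)}}\sqrt{\kappa(1-\kappa)}\,(1+o_p(1))\;,
\]
which, together with $q(n_j)/n_j\to0$, gives both the claimed lower bound and the divergence $T_{n_j}\cp\infty$. The main technical obstacle is the uniform control of $\hat{\gamma}_j(l)$ over $l\le q(n_j)$ under a non-stationary sequence: one has to show that the zero-mean stochastic components of $\hat{\gamma}_j(l)-\kappa(1-\kappa)\diffopgain^{2}$ are $o_p(1)$ uniformly on the growing range of lags, which requires handling each segment's own covariance structure as well as the transition between them, and only relying on the summability properties of $\gamma_j$ already established in the stationary case.
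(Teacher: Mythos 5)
Your proposal is correct and follows essentially the same route as the paper: lower-bound the CUSUM at the true break fraction $\kappa$ (equivalently at $k^*=\pent{n_j\kappa}$, since $\hat k_j$ is the argmax), show the numerator equals the deterministic shift $n_j\kappa(1-\kappa)\left|\sigma_{j;1}^2-\sigma_{j;2}^2\right|$ up to $O_p(\sqrt{n_j})$ fluctuations handled segment by segment, and show the Bartlett denominator is inflated to order $q(n_j)$ by the mean-shift bias $\kappa(1-\kappa)\left(\sigma_{j;1}^2-\sigma_{j;2}^2\right)^2$ while the within-segment covariance contributions stay $O_P(1)$. The only difference is the constant: using $\sum_{|\tau|\le q}w_\tau(q)=q+1$ you get $s_{j,n_j}^2\sim q\,\kappa(1-\kappa)\left(\sigma_{j;1}^2-\sigma_{j;2}^2\right)^2$ and hence the bound $\sqrt{n_j/q}\,\sqrt{\kappa(1-\kappa)}$, a factor $\sqrt{2}$ sharper than the paper's $2q$ normalization, which a fortiori yields the stated inequality and $T_{n_j}\cp\infty$.
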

\begin{proof}
Let $k_j=\pent{n_j \kappa}$ the change point in the wavelet spectrum at scale $j$.
We write $q$ for $q(n_j)$ and suppress the dependence in $n$ in this proof to alleviate the notation. By
definition $T_{n_j} =\frac{1}{s_{j,n_j}} \underset{0 \leq t \leq 1}{\sup} \left( S_{n_j}(t) - t S_{n_j}(1) \right)$, where the process $t \mapsto S_{n_j}(t)$ is
defined in \eqref{eq:partial-sum}. Therefore, $T_{n_j} \geq \frac{1}{s_{j,n_j}}\left(S_{n_j}(\kappa) - \kappa S_{n_j}(1)\right)$. The proof consists in establishing that
$\frac{1}{s_{j,n_j}}\left(S_{n_j}(\kappa) - \kappa S_{n_j}(1)\right)= \frac{\sqrt{n_j}}{\sqrt{2q(n_j)}}\sqrt{\kappa(1-\kappa)}(1+o_p(1))$. We first decompose this
difference as follows
\begin{align*}
S_{n_j}(\kappa)- \kappa S_{n_j}(1)&=\frac{1}{\sqrt{n_j}}\left|\sum_{i=1}^{\pent{n_j \kappa}}W_{j,i}^2- \kappa \sum_{i=1}^{n_j}W_{j,i}^2\right|\\
&=B_{n_j}+f_{n_j}
\end{align*}
where $B_{n_j}$ is a fluctuation term
\begin{equation}\label{eq:Bnj} B_{n_j}=\frac{1}{\sqrt{n_j}}\left|\sum_{i=1}^{k_j}(W_{j,i}^2-\sigma_{j,i}^2)- \kappa \sum_{i=1}^{n_j}(W_{j,i}^2-\sigma_{j,i}^2)\right|
\end{equation}
and $f_{n_j}$ is a bias term
\begin{equation}\label{eq:fnj}
f_{n_j}=\frac{1}{\sqrt{n_j}}\left|\sum_{i=1}^{k_j}\sigma_{j,i}^2- \kappa \sum_{i=1}^{n_j}\sigma_{j,i}^2\right| \eqsp.
\end{equation}
Since support of $h_{j,l}$ is included in $[-\L(2^j+1),0]$ where $h_{j,l}$ is defined in (\ref{eq:FilterJ}), there exits a constant $a>0$ such that
\begin{align}\label{W1}
&W_{j,i}=W_{j,i;1}=\sum_{l\leq k}h_{j,2^ji-l}X_{l,1},\;\; \text{for}\;\; i<k_j,\\
\label{W2}
&W_{j,i}=W_{j,i;2}=\sum_{l>k}h_{j,2^ji-l}X_{l,2}\;\; \text{for}\;\; i>k_j+a,\\
&W_{j,i}=\sum_{l}h_{j,2^ji-l}X_l,\;\; \text{for}\;\; k_j\leq i<k_j+a.
\end{align}
Since the process $\{X_{l,1}\}_{l \in \Zset}$ and $\{ X_{l,2} \}_{l \in \Zset}$ are both $K$-th order covariance stationary,
the two processes $\{ W_{j,i;1} \}_{i \in \Zset}$ and $\{ W_{j,i;2} \}_{i \in \Zset}$ are also
covariance stationary. The wavelet coefficients $W_{j,i}$ for $i \in \{k_j, \dots, k_j+a\}$ are computed using observations from
the two processes $X_1$ and $X_2$.
Let us show that there exits a constant $C>0$ such that, for all integers $l$ and $\tau$,
\begin{equation}\label{var}
\PVar \left(\sum_{i=l}^{l+\tau}W_{j,i}^2 \right)\leq C\tau \eqsp.
\end{equation}
Using \eqref{autocov}, we have, for $\epsilon=1,2$,
$$
\PVar\left(\sum_{i=l}^{l+\tau}  W_{j,i;\epsilon}^2 \right)\leq
\frac{\tau}{\pi}\int_{-\pi}^{\pi}\left|\mathbf{D}_{j,0;\epsilon}(\lambda)\right|^2 \rmd \lambda
$$
where, $\mathbf{D}_{j,0;1}(\lambda)$ and $\mathbf{D}_{j,0;2}(\lambda)$ denote the spectral density of
the stationary processes $\{ W_{j,i;1} \}_{i \in \Zset}$ and $\{ W_{j,i;2} \}_{i \in \Zset}$ respectively. Using Minkovski inequality, we have for $l\leq k_j\leq k_j+a<l+\tau$ that
$\left(\PVar\sum_{i=l}^{l+\tau}W_{j,i}^2\right)^{1/2}$ is at most
\begin{align*}
&\left(\PVar\sum_{i=l}^{k_j}W_{j,i}^2\right)^{1/2}+\sum_{i=k_j+1}^{k_j+a}\left(\PVar
  W_{j,i}^2\right)^{1/2}+\left(\PVar\sum_{i=k_j+a+1}^{l+\tau}W_{j,i}^2\right)^{1/2}\\
&\hspace{0.5cm}\leq \left(\PVar\sum_{i=l}^{k_j}W_{j,i;1}^2\right)^{1/2}+a\sup_i(\PVar W_{j,i}^2)^{1/2}+\left(\PVar\sum_{i=k_j+a+1}^{l+\tau}W_{j,i;2}^2 \right)^{1/2}.
\end{align*}
Observe that $\PVar(W_{j,i}^2)\leq 2(\sum_{l}|h_{j,l}|)^2\left(\sigma_{j,1}^2\vee \sigma_{j,2}^2\right)^2<\infty$ for $ k_j\leq i<k_j+a,$ where
\begin{equation}\label{eqvar}
 \sigma_{j;1}^2=\PE\left[W_{j,i;1}^2\right],\;\, \text{and}\;\, \sigma_{j;2}^2=\PE\left[W_{j,i;2}^2\right]
\end{equation}
The three last displays imply (\ref{var}) and thus that $B_{n_j}$ is bounded in probability. Moreover, since $f_{n_j}$ reads
\begin{align*}
&\frac{1}{\sqrt{n_j}}\left|\sum_{i=1}^{\pent{n_j \kappa}} \sigma_{j;1}^2-\kappa\sum_{i=1}^{\pent{n_j\kappa}} \sigma_{j;1}^2-\kappa\sum_{i=\pent{n_j \kappa}+1}^{\pent{n_j \kappa}+a}\sigma_{j,i}^2-\kappa\sum_{i=\pent{n_j \kappa}+a+1}^{n_j} \sigma_{j;2}^2\right|\\
&\hspace{4cm}= \sqrt{n_j}\kappa(1-\kappa)\left|\sigma_{j;1}^2-\sigma_{j;2}^2\right|+O(n_j^{-1/2}) \eqsp,
\end{align*}
 we get
\begin{equation}\label{power}
S_{n_j}(\kappa)-\kappa S_{n_j}(1)=\sqrt{n_j}\kappa(1-\kappa)\left(\sigma_{j;1}^2- \sigma_{j;2}^2\right)+O_P(1) \eqsp.
\end{equation}
We now study the denominator $s_{j,n_j}^2$ in \eqref{eq:cusum}.
Denote by
\[
\bar{\sigma}_j^2=\frac{1}{n_j}\sum_{i=1}^{n_j}\sigma_{j,i}^2
\]
the expectation of the scalogram (which now differs from the wavelet spectrum). Let us consider for $\tau\in\{0,\dots,q(n_j)\}$ $\hat{\gamma}_j(\tau)$ the empirical covariance of the wavelet coefficients defined in \eqref{eq:gamma}.
\begin{multline*}
\hat{\gamma}_j(\tau)=\frac{1}{n_j}\sum_{i=1}^{n_j-\tau}(W_{j,i}^2-\bar{\sigma}_j^2)(W_{j,i+\tau}^2-\bar{\sigma}_j^2)-(1+\frac{\tau}{n_j})\left(\bar{\sigma}_j^2-\scalogram{j}\right)^2\\
+\frac{1}{n_j}(\scalogram{j}-\bar{\sigma}_j^2)\left\{\sum_{i=n_j-\tau+1}^{n_j}(W_{j,i}^2-\bar{\sigma}_j^2)+\sum_{i=1}^{\tau}(W_{j,i}^2-\bar{\sigma}_j^2)\right\}.
\end{multline*}
Using Minkowski inequality and (\ref{var}), there exists a constant $C$ such that for all $1\leq l\leq l+\tau\leq n_j,$
\begin{align*}
\left\|\sum_{i=l}^{l+\tau}\left(W_{j,i}^2-\Sbar\right)\right\|_2&\leq\left\|\sum_{i=l}^{l+\tau}\left(W_{j,i}^2-\sigma_{j,i}^2\right)\right\|_2+\left\|\sum_{i=l}^{l+\tau}\left(\sigma_{j,i}^2-\Sbar\right)\right\|_2\\
&\leq C(\tau^{1/2}+\tau),
\end{align*}
and similarly $$\left\|\scalogram{j}-\bar{\sigma}_j^2\right\|_2\leq\frac{C}{\sqrt{n_j}}.$$
By combining these two latter bounds, the Cauchy-Schwarz inequality implies that
\begin{equation*}\label{eq:esperance}
\left\|\frac{1}{n_j}\left(\scalogram{j}-\bar{\sigma}_j^2\right)\sum_{i=l}^{l+\tau}(W_{j,i}^2-\bar{\sigma}_j^2)\right\|_1\leq \frac{C(\tau^{1/2}+\tau)}{n_j^{3/2}} \eqsp.
\end{equation*}
Recall that $s_{j,n_j}^2=\sum_{\tau=-q}^{q}w_{\tau}(q)\hat{\gamma}_j(\tau)$ where $w_{\tau}(q)$ are the so-called Bartlett weights
defined in \eqref{eq-bandwidth}.
We now use the bounds above to identify the limit of $s_{j,n_j}^2$ as the sample size goes to infinity. The two previous identities imply that
\begin{align*}
\sum_{\tau=0}^{q}w_{\tau}(q) \left(1+\frac{\tau}{n_j} \right)\left\|\bar{\sigma}_j^2-\scalogram{j}\right\|_2& \leq C\frac{q^{2}}{n_j^{3/2}}
\end{align*}
and
\begin{align*}
\sum_{\tau=0}^{q}w_{\tau}(q)\left\|\frac{1}{n_j}\left(\scalogram{j}-\bar{\sigma}_j^2\right)\sum_{i=l}^{l+\tau}(W_{j,i}^2-\bar{\sigma}_j^2)\right\|_1&\leq C\frac{q^{2}}{n_j^{3/2}},
\end{align*}
Therefore, we obtain
\begin{equation}
s_{j,n_j}^2=\sum_{\tau=-q}^{q}w_{\tau}(q)\tilde{\gamma_j}(\tau)+O_P\left(\frac{q^2}{n_j^{3/2}}\right),
\end{equation}
 where $\tilde{\gamma}_j(\tau)$ is defined by
\begin{equation}\label{eq:main} \tilde{\gamma}_j(\tau)=\frac{1}{n_j}\sum_{i=1}^{n_j-\tau}(W_{j,i}^2-\bar{\sigma}_j^2)(W_{j,i+\tau}^2-\bar{\sigma}_j^2).
\end{equation}
Observe that since $q=o(n_j)$, $k_j=\pent{n_j \kappa}$ and $0\leq\tau\leq q$, then for any given integer $a$ and $n$ large enough $0\leq\tau\leq k_j
\leq k_j+a\leq n_j-\tau$ thus in (\ref{eq:main}) we may write
$\sum_{i=1}^{n_j-\tau}=\sum_{i=1}^{k_j-\tau}+\sum_{i=k_j-\tau+1}^{k+a}+\sum_{i=k_j+a+1}^{n_j-\tau}$.
Using $\sigma_{j;1}^2$ and $\sigma_{j;2}^2$ in (\ref{eq:main}) and
straightforward bounds that essentially follow from~(\ref{var}), we get $s_{j,n_j}^2= \bar{s}_{j,n_j}^2 +O_P\left(\frac{q^2}{n_j}\right)$, where
\begin{multline*}
\bar{s}_{j,n_j}^2=\sum_{\tau=-q}^{q}w_{\tau}(q)\Bigg(\frac{k}{n_j}\tilde{\gamma}_{j;1}(\tau)+\frac{n_j-k_j-a}{n_j}\tilde{\gamma}_{j;2}(\tau)\\
+\frac{k_j-|\tau|}{n_j}\left(\sigma_{j;1}^2-\bar{\sigma}_j^2\right)^2
+\frac{n_j-k_j-a-|\tau|}{n_j}\left(\sigma_{j;2}^2-\bar{\sigma}_j^2\right)^2\Bigg)
\end{multline*}
with
\begin{align*}
&\tilde{\gamma}_{j;1}(\tau)=\frac{1}{k_j}\sum_{i=1}^{k_j-\tau}\left(W_{j,i}^2-\sigma_{j;1}^2\right)\left(W_{j,i+\tau}^2-\sigma_{j;1}^2\right), \\
&\tilde{\gamma}_{j;2}(\tau)=\frac{1}{n_j-k_j-a}\sum_{i=k_j+a+1}^{n_j-\tau}\left(W_{j,i}^2-\sigma_{j;2}^2\right)\left(W_{j,i+\tau}^2-\sigma_{j;2}^2\right) \;.
\end{align*}
Using that $\bar{\sigma}_j^2\to \kappa \sigma_{j;1}^2+(1-\kappa) \sigma_{j;2}^2$ as $n_j\to
\infty,$ and that,  for $\epsilon=1,2$,
\[
s_{j,n_j;\epsilon}^2\eqdef\sum\limits_{\tau=-q}^{q}w_{\tau}(q)\tilde{\gamma}_{j;\epsilon}(\tau)\cp
\frac1\pi\int_{-\pi}^{\pi}|\mathbf{D}_{j,0;\epsilon}(\lambda)|^2 \rmd\lambda  \eqsp,
\]
we obtain
\begin{multline}\label{eq:s2asympH1}
s_{j,n_j}^2=
\frac{1}{\pi}\int_{-\pi}^{\pi}\left\{\kappa\left|\mathbf{D}_{j,0;1}(\lambda)\right|^2+ (1-\kappa)\left|\mathbf{D}_{j,0;2}(\lambda)\right|^2\right\} \rmd \lambda\\
2q \kappa(1-\kappa)\left(\sigma_{j;1}^2-\sigma_{j;2}^2\right)^2+ +o_p(1)+O_P\left(\frac{q^2}{n_j}\right).
\end{multline}
Using (\ref{power}), the last display and that $o_p(1)+O_P\left(\frac{q^2}{n_j}\right)=o_p(q),$ we finally obtain
\begin{align*}
S_{n_j}(\kappa) - \kappa S_{n_j}(1)&=\frac{\sqrt{n_j}\kappa(1-\kappa)\left|\sigma_{j;1}^2-\sigma_{j;2}^2\right|+O_P(1)}{\sqrt{2q(\kappa(1-\kappa))}\left|\sigma_{j;1}^2-\sigma_{j;2}^2\right|+o_p(\sqrt{q})}\\
&=\frac{\sqrt{n_j}}{\sqrt{2q}}\sqrt{\kappa(1-\kappa)}(1+o_p(1)) \eqsp,
\end{align*}
which concludes the proof of Theorem~\ref{theo:power-single-case}.
\end{proof}
\subsection{Power of the test in multiple scales case}
The results obtained in the previous Section in the single scale case easily extend to the test procedure designed to handle the multiple
scales case. The alternative is specified exactly in the same way than in the single scale case but instead of considering the square of the wavelet coefficients at a given scale, we now study the behavior of the between-scale process.
Consider the following process for $\epsilon=1,2$,
$$
Y_{J_1,J_2,i;\epsilon}=\left(W^{2}_{J_2,i;\epsilon},\sum_{u=1}^{2}W^{2}_{J_2-1,2(i-1)+u;\epsilon},\dots,
\sum_{u=1}^{2^{(J_2-J_1)}}W^{2}_{J_1,2^{(J_2-J_1)}(i-1)+u; \epsilon}\right)^T\;,
$$
where $J_1$ and $J_2$ are respectively the finest and the coarsest scale considered in the test, $W_{j,i;\epsilon}$ are defined in (\ref{W1}) and (\ref{W2}) and $\Gamma_{J_1,J_2;\epsilon}$  the $(J_2-J_1+1)\times (J_2-J_1+1)$ symmetric non negative matrix such that
\begin{equation}\label{mat} \Gamma_{J_1,J_2;\epsilon}(j,j')=\sum\limits_{h\in\Zset}\cov(Y_{j,0;\epsilon},Y_{j',h;\epsilon})=\int_{-\pi}^{\pi}\left\|\mathbf{D}_{j,u;\epsilon}(\lambda;f)\right\|^2d\lambda,
\end{equation}
with $1\leq j,j'\leq J_2-J_1+1$ for $\epsilon=1,2$.
\begin{theorem}\label{theo:power-multi-case}
Consider $\{ X_k^n \}_{k \in \Zset}$ be a sequence of processes specified by \eqref{eq:H1} and \eqref{eq:definition:X:alternative}.
Finally assume that for at least one $j\in\{J_1,\dots,J_2\}$  and that at least one of the two matrices $\Gamma_{J_1,J_2;\epsilon}$ $\epsilon=1,2$ defined in (\ref{mat}) is positive definite. Assume in addition that
Finally, assume that the number of lags $q(n_{J_2})$ in the Barlett estimate of the covariance matrix \eqref{consistencyjj'} is non decreasing and:
\begin{equation}\label{eq:window}
q(n_{j_2})\to \infty  \quad \text{and} \quad \frac{q^2(n_{J_2})}{n_{J_2}}\to 0,\;\,\text{as}\;\,n_{J_2}\to\infty,\; \eqsp.
\end{equation}
Then, the W2-CUSUM test statistics  $T_{J_1,J_2}$ defined by (\ref{eq:cusum-multiple}) satisfies
\begin{align*}
\frac{n_{J_2}}{2q(n_{J_2})}\kappa(1-\kappa)\left(1+o_p(1)\right) \leq T_{J_1,J_2} \cp\infty \;\text{as}\; \quad n_{J_2}\to \infty
\end{align*}
\end{theorem}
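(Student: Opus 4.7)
I would adapt the three-stage argument of Theorem~\ref{theo:power-single-case} to the vector/matrix-valued setting, obtaining a lower bound on any max- or integral-type functional of $T_{J_1,J_2}(\cdot)$ by evaluating it at the true break $t=\kappa$. Let $\mu_\epsilon \eqdef \PE[Y_{J_1,J_2,i;\epsilon}]$ (independent of $i$ by stationarity within each regime) and $\delta \eqdef \mu_1 - \mu_2 \in \Rset^{J_2-J_1+1}$; by~\eqref{eq:H1} at any scale $j$ for which $\sigma^2_{j;1}\neq\sigma^2_{j;2}$, the vector $\delta$ is non-zero. Applying the single-scale decomposition~\eqref{eq:Bnj}--\eqref{eq:fnj} component-wise to $S_{J_1,J_2}(t)$, the deterministic bias contribution at $t=\kappa$ yields
\begin{equation*}
S_{J_1,J_2}(\kappa) - \kappa\,S_{J_1,J_2}(1) = \sqrt{n_{J_2}}\,\kappa(1-\kappa)\,\delta + O_P(1),
\end{equation*}
the $O_P(1)$ fluctuation being controlled by the vector analogue of~\eqref{var} inside each regime plus the $O(1)$ boundary contribution coming from the indices that involve both $X_{\cdot,1}$ and $X_{\cdot,2}$.

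\textbf{Expansion of the Bartlett matrix.} The central technical step is the matrix analogue of~\eqref{eq:s2asympH1}, namely
\begin{equation*}
\hat{\Gamma}_{J_1,J_2} = \Gamma_0 + 2\,q(n_{J_2})\,\kappa(1-\kappa)\,\delta\,\delta^T + O_P\!\left(\frac{q^2(n_{J_2})}{n_{J_2}}\right) + o_P(q(n_{J_2})),
\end{equation*}
where $\Gamma_0 \eqdef \kappa\,\Gamma_{J_1,J_2;1} + (1-\kappa)\,\Gamma_{J_1,J_2;2}$ is positive definite under the standing assumption that at least one $\Gamma_{J_1,J_2;\epsilon}$ is. I would mirror the single-scale derivation: split the inner sum of $\hat\gamma_{J_1,J_2}(\tau)$ across $k=\pent{n_{J_2}\kappa}$ into three pieces (pre-break, an $O(1)$-length bridge, and post-break), use $\bar{Y}_{J_1,J_2}=\kappa\mu_1+(1-\kappa)\mu_2+O_P(n_{J_2}^{-1/2})$, and note that the outer product $(\mu_\epsilon-\bar{Y}_{J_1,J_2})(\mu_\epsilon-\bar{Y}_{J_1,J_2})^T$ equals $(1-\kappa)^2\delta\delta^T+o_P(1)$ for $\epsilon=1$ and $\kappa^2\delta\delta^T+o_P(1)$ for $\epsilon=2$. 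Weighted by the regime lengths $k/n_{J_2}\to\kappa$, $(n_{J_2}-k)/n_{J_2}\to 1-\kappa$ and summed against $\sum_{|\tau|\leq q(n_{J_2})} w_\tau(q(n_{J_2}))=q(n_{J_2})+O(1)$, these collapse to the rank-one term $2\,q(n_{J_2})\,\kappa(1-\kappa)\,\delta\delta^T$.

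\textbf{Sherman--Morrison and conclusion.} Since $\Gamma_0$ is positive definite, $a \eqdef \delta^T\Gamma_0^{-1}\delta>0$. Applying the Sherman--Morrison formula to $\Gamma_0+c\,\delta\delta^T$ with $c=2\,q(n_{J_2})\,\kappa(1-\kappa)\to\infty$ gives
\begin{equation*}
\delta^T\,\hat{\Gamma}_{J_1,J_2}^{-1}\,\delta = \frac{a}{1+c\,a}\,(1+o_P(1)) = \frac{1+o_P(1)}{2\,q(n_{J_2})\,\kappa(1-\kappa)}.
\end{equation*}
Combining with the numerator estimate,
\begin{equation*}
T_{J_1,J_2}(\kappa) = n_{J_2}\,\kappa^2(1-\kappa)^2\,\delta^T\hat\Gamma_{J_1,J_2}^{-1}\,\delta\,(1+o_P(1)) = \frac{n_{J_2}}{2\,q(n_{J_2})}\,\kappa(1-\kappa)\,(1+o_P(1)),
\end{equation*}
which is the announced lower bound on $T_{J_1,J_2}\geq T_{J_1,J_2}(\kappa)$, and divergence to $+\infty$ is immediate from~\eqref{eq:window}.

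\textbf{Main obstacle.} The difficult point is the matrix expansion of the second paragraph: separating, in operator norm, the rank-one ``signal'' $2q\kappa(1-\kappa)\,\delta\delta^T$ from the $O_P(q)$-sized \emph{non-rank-one} contributions of $\tilde\gamma_{J_1,J_2;\epsilon}(\tau)$ (which are absorbed into the bounded term $\Gamma_0$), and keeping the remainder strictly below the signal. This is precisely why the bandwidth condition must be strengthened from $q/n\to 0$ (Theorem~\ref{theo:power-single-case}) to $q^2/n\to 0$ in~\eqref{eq:window}: the $O_P(q^2(n_{J_2})/n_{J_2})$ remainder must remain $o_P(q(n_{J_2}))$ for the Sherman--Morrison asymptotics above to deliver the rate $1/(2q(n_{J_2})\kappa(1-\kappa))$ on $\delta^T\hat{\Gamma}_{J_1,J_2}^{-1}\delta$.
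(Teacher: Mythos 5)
Your proposal is correct and follows essentially the same route as the paper's own proof: lower-bounding $T_{J_1,J_2}$ at $t=\kappa$, applying the single-scale bias/fluctuation decomposition component-wise to get the $\sqrt{n_{J_2}}\,\kappa(1-\kappa)\,\Delta+O_P(1)$ numerator, expanding the Bartlett matrix as $\kappa\Gamma_{J_1,J_2;1}+(1-\kappa)\Gamma_{J_1,J_2;2}+2q\kappa(1-\kappa)\Delta\Delta^T$ plus remainders, and inverting via the matrix inversion (Sherman--Morrison) lemma under the strengthened bandwidth condition $q^2(n_{J_2})/n_{J_2}\to 0$. Your closing remark on why the rank-one signal forces this stronger condition matches the paper's own remark following the theorem.
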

\begin{proof}
As in the single scale case we drop the dependence in $n_{J_2}$ in the expression of $q$ in this proof section.
Let $k_j=\pent{n_{j} \kappa}$ the change point in the wavelet spectrum at scale $j.$ Then using (\ref{partial}), we have
that $T_{J_1,J_2} \geq S_{J_1,J_2}(\kappa)- \kappa S_{J_1,J_2}(1)$ where
$$
S_{J_1,J_2}(\kappa)- \kappa S_{J_1,J_2}(1)=\frac{1}{\sqrt{n_{J_2}}}\left[n_j(B_{n_j}+f_{n_j})\right]_{j=J_1,\dots,J_2}\;,
$$
where $B_{n_j}$ and $f_{n_j}$ are defined respectively by (\ref{eq:Bnj}) and (\ref{eq:fnj}). Hence as in (\ref{power}), we have
$$
S_{J_1,J_2}(\kappa)-\kappa S_{J_1,J_2}(1)=\sqrt{n_{J_2}}\kappa(1-\kappa)\Delta+O_P(1)\;,
$$
where $\Delta=\left[\sigma_{J_1,J_2;1}^2-\sigma_{J_1,J_2;2}^2\right]^T$ and $$\sigma_{J_1,J_2;\epsilon}^2=\left(\sigma_{J_2;\epsilon}^2,\dots,2^{J_2-J_1} \sigma_{J_1;\epsilon}^2\right)^T \eqsp.$$
We now study the asymptotic behavior of $\hat{\Gamma}_{J_1,J_2}$.
Using similar arguments as those leading to \eqref{eq:s2asympH1} in the proof of Theorem~\ref{theo:power-single-case}, we have
\begin{multline*}
\hat{\Gamma}_{J_1,J_2}=2q \kappa (1- \kappa)\Delta\Delta^T
+ \kappa \Gamma_{J_1,J_2;1}+(1-\kappa)\Gamma_{J_1,J_2;2}\\+O_P\left(\frac{q}{n_{J_2}}\right)+O_P\left(q^{-1}\right)+O_P\left(\frac{q^2}{n_{J_2}}\right).
\end{multline*}
For $\Gamma$ a positive definite matrix, consider the matrix $\textbf{M}(\Gamma)=\Gamma+2q \kappa(1-\kappa)\Delta\Delta^T$.
Using the matrix inversion lemma, the inverse of $\textbf{M}(\Gamma)$ may be expressed as
$$\textbf{M}^{-1}(\Gamma)=\left(\Gamma ^{-1}-\frac{2q \kappa(1-\kappa)\Gamma ^{-1}\Delta\Delta^T\Gamma^{-1}}{1+2q \kappa(1-\kappa)\Delta^T\Gamma^{-1}\Delta}\right) \eqsp,$$
which implies that
$$\Delta^T\textbf{M}^{-1}(\Gamma)\Delta=\frac{\Delta^T\Gamma^{-1}\Delta}{1+2q \kappa(1-\kappa)\Delta^T\Gamma^{-1}\Delta}.$$
Applying these two last relations to $\Gamma_0= \kappa \Gamma_{J_1,J_2}^{(1)}+(1-\kappa)\Gamma_{J_1,J_2}^{(2)}$ which is symmetric and definite positive (since, under the stated assumptions at least one of the two matrix  $\Gamma_{J_1,J_2;\epsilon}$, $\epsilon=1,2$ is positive)
we have
\begin{align*}
T_{J_1,J_2} &\geq \kappa^2(1-\kappa)^2n_{J_2}\Delta^T\textbf{M}^{-1}\left(\Gamma_0+O_P\left(\frac{q^2}{n_{J_2}}\right)+O_P(q^{-1})\right)\Delta+O_P(1)\\
&=n_{J_2}\kappa^2(1-\kappa)^2\frac{\Delta^T\Gamma_0^{-1}\Delta+O_P\left(\frac{q^2}{n_{J_2}}\right)+O_P(q^{-1}}{2q\kappa(1-\kappa)\Delta^T\Gamma_0^{-1}\Delta(1+o_p(1))}+O_P(1)\\
&=\frac{n_{J_2}}{2q}\kappa(1-\kappa)\left(1+o_p(1)\right)\;.
\end{align*}
Thus $T_{J_1,J_2}\cp\infty$ as $n_{J_2}\to \infty$,  which completes the proof of Theorem~\ref{theo:power-multi-case}.
\end{proof}

\begin{remark}
The term corresponding to the "bias" term $\kappa \Gamma_{J_1,J_2;1}+(1-\kappa)\Gamma_{J_1,J_2;2}$ in the single case is
$\frac{1}{\pi}\int_{-\pi}^{\pi}\{\kappa|\mathbf{D}_{j,0;1}(\lambda)|^2+
(1-\kappa)|\mathbf{D}_{j,0;2}(\lambda)|^2\} \rmd \lambda=O(1)$, which can be neglected since the main term in $s_{j,n_j}^2$ is of
order $q\to\infty$. In multiple scale case, the main term in $\hat{\Gamma}_{J_1,J_2}$ is still of order $q$ but is no longer
invertible (the rank of the leading term is equal to 1). A closer look is thus necessary and the term $\kappa \Gamma_{J_1,J_2;1}+(1-\kappa)\Gamma_{J_1,J_2;2}$ has to be taken into account. This is also explains why we need the more stringent condition \eqref{eq:window} on the bandwidth size in the multiple scales case.
\end{remark}

\section{Some examples}
\label{sec:applications}
 In this section, we report the results of a limited Monte-Carlo experiment to assess the finite sample property of the test procedure.
 Recall that the test rejects the null if either $\mathrm{CVM}(J_1,J_2)$ or $\mathrm{KSM}(J_1,J_2)$, defined in \eqref{eq:definition-CVM} and
 \eqref{eq:definition-KSM} exceeds the $(1-\alpha)$-th quantile of the distributions $C(J_2-J_1+1)$ and $D(J_2-J_1+1)$, specified in
 \eqref{eq:definition-C(d)} and \eqref{eq:definition-D(d)}. The quantiles are reported in Tables \eqref{tab:quantile-CVM} and
 \eqref{tab:quantile-KSM}, and have been obtained by truncating the series expansion of the cumulative distribution function.
To study the influence on the test procedure of the strength of the dependency, we consider different classes of Gaussian processes, including white noise, autoregressive moving average (ARMA) processes as well as fractionally integrated ARMA (ARFIMA($p,d,q$)) processes which are
 known to be long range dependent. In all the simulations we set the lowest scale to $J_1=1$ and vary the coarsest scale $J_2=J$. We used a
 wide range of values of sample size  $n$, of the number of scales $J$  and of the parameters of the ARMA and FARIMA
 processes but, to conserve space, we present the results only for $n=512,1024,2048,4096,8192$, $J=3,4,5$ and four different
 models: an AR(1) process with parameter $0.9$, a MA(1) process with parameter 0.9, and two ARFIMA(1,d,1) processes with
 memory parameter $d=0.3$ and $d=0.4$, and the same AR and MA coefficients, set to 0.9 and 0.1.
 In our simulations, we have used the Newey-West estimate of the bandwidth $q(n_j)$ for the covariance estimator
 (as implemented in the R-package \emph{sandwich}).
\subsubsection{Asymptotic level of $KSM$ and $CVM$.}
We investigate the finite-sample
behavior of the test statistics $\mathrm{CVM}(J_1,J_2)$ and $\mathrm{KSM}(J_1,J_2)$ by computing the number of times that the null hypothesis is rejected
in $1000$ independent replications of each of these processes under $\mathcal{H}_0$ , when the asymptotic level is set to $0.05$.
\begin{table}\centering
 \begin{tabular}{*{7}{c}}

        \multicolumn{1}{c}{} & \multicolumn{5}{c}{{White noise}}&\multicolumn{1}{c}{} \\
        \hline\hline
        \multicolumn{2}{c}{$n$}& 512& 1024 & 2048& 4096 & 8192  \\
        \hline
        $J=3$& $KSM$ & {0.02}&{ 0.01}& {0.03}&{ 0.02}& {0.02}\\
        $J=3$& $CVM$ & {0.05}&{ 0.045}& {0.033}&{ 0.02}& {0.02}\\
        \hline
        $J=4$ &$KSM$ & {0.047}&{ 0.04}& {0.04}&{ 0.02}& {0.02}\\
        $J=4$&$CVM$ & {0.041}&{ 0.02}& {0.016}&{ 0.016}& {0.01}\\
        \hline
        $J=5$ &$KSM$ & {0.09}&{ 0.031}& {0.02}&{ 0.025}& {0.02}\\
        $J=5$&$CVM$ & {0.086}&{ 0.024}& {0.012}&{ 0.012}& {0.02}\\
        \hline
        \hline

        \end{tabular}
\caption{Empirical level of $\texttt{KSM}-\texttt{CVM}$ for a white noise.}
\label{tab1}
\end{table}
 \begin{table}\centering
 \begin{tabular}{*{7}{c}}

        \multicolumn{1}{c}{} & \multicolumn{5}{c}{{MA(1)$[\theta=0.9]$}}&\multicolumn{1}{c}{} \\
        \hline\hline
        \multicolumn{2}{c}{$n$}& 512& 1024 & 2048& 4096 & 8192  \\
        \hline
        $J=3$& $KSM$ & {0.028}&{ 0.012}& {0.012}&{ 0.012}& {0.02}\\
        $J=3$& $CVM$ & {0.029}&{ 0.02}& {0.016}&{ 0.016}& {0.01}\\
        \hline
        $J=4$ &$KSM$ & {0.055}&{ 0.032}& {0.05}&{ 0.025}& {0.02}\\
        $J=4$&$CVM$ & {0.05}&{ 0.05}& {0.03}&{ 0.02}& {0.02}\\
        \hline
        $J=5$ &$KSM$ & {0.17}&{ 0.068}& {0.02}&{ 0.02}& {0.02}\\
        $J=5$&$CVM$ & {0.13}&{ 0.052}& {0.026}&{ 0.021}& {0.02}\\
       \hline
       \hline

        \end{tabular}
\caption{Empirical level of $\texttt{KSM}-\texttt{CVM}$ for a $MA(q)$ process.}
\label{tab2}
\end{table}

 \begin{table}\centering
 \begin{tabular}{*{7}{c}}

        \multicolumn{1}{c}{} & \multicolumn{5}{c}{{AR(1)$[\phi=0.9]$}}&\multicolumn{1}{c}{} \\
        \hline\hline
        \multicolumn{2}{c}{$n$}& 512& 1024 & 2048& 4096 & 8192  \\
        \hline
        $J=3$& $KSM$ & {0.083}&{ 0.073}& {0.072}&{ 0.051}& {0.04}\\
        $J=3$& $CVM$ & {0.05}&{ 0.05}& {0.043}&{ 0.032}& {0.03}\\
        \hline
        $J=4$ &$KSM$ & {0.26}&{ 0.134}& {0.1}&{ 0.082}& {0.073}\\
        $J=4$&$CVM$ & {0.14}&{ 0.092}& {0.062}&{ 0.04}& {0.038}\\
        \hline
        $J=5$ &$KSM$ & {0.547}&{ 0.314}& {0.254}&{ 0.22}& {0.11}\\
        $J=5$&$CVM$ & {0.378}&{ 0.221}& {0.162}&{ 0.14}& {0.093}\\

        \hline\hline

        \end{tabular}
\caption{Empirical level of $\texttt{KSM}-\texttt{CVM}$ for an
         $AR(1)$ process.}
\label{tab3}
\end{table}

 \begin{table}\centering
 \begin{tabular}{*{7}{c}}
        \multicolumn{1}{c}{} & \multicolumn{5}{c}{{ARFIMA(1,0.3,1)$[\phi=0.9,\theta=0.1]$}}&\multicolumn{1}{c}{} \\
        \hline \hline
        \multicolumn{2}{c}{$n$}& 512& 1024 & 2048& 4096 & 8192  \\
        \hline
        $J=3$& $KSM$ & {0.068}&{ 0.047}& {0.024}&{ 0.021}& {0.02}\\
        $J=3$& $CVM$ & {0.05}&{ 0.038}& {0.03}&{ 0.02}& {0.02}\\
        \hline
        $J=4$ &$KSM$ & {0.45}&{ 0.42}& {0.31}&{ 0.172}& {0.098}\\
        $J=4$&$CVM$ & {0.39}&{ 0.32}& {0.20}&{ 0.11}& {0.061}\\
        \hline
        $J=5$& $KSM$ & {0.57}&{ 0.42}& {0.349}&{ 0.229}& {0.2}\\
        $J=5$& $CVM$ & {0.41}&{ 0.352}& {0.192}&{ 0.16}& {0.11}\\

        \hline \hline

        \end{tabular}
\caption{Empirical level of $\texttt{KSM}-\texttt{CVM}$ for an $ARFIMA(1,0.3,1)$ process.}
\label{tab4}
\end{table}

 \begin{table}\centering
 \begin{tabular}{*{7}{c}}
        \multicolumn{1}{c}{} & \multicolumn{5}{c}{{ARFIMA(1,0.4,1)$[\phi=0.9,\theta=0.1]$}}&\multicolumn{1}{c}{} \\
        \hline \hline
         \multicolumn{2}{c}{$n$}& 512& 1024 & 2048& 4096 & 8192  \\

        \hline
        $J=3$& $KSM$ & {0.11}&{ 0.063}& {0.058}&{ 0.044}& {0.031}\\
        $J=3$& $ CVM$ & {0.065}&{ 0.05}& {0.043}&{ 0.028}& {0.02}\\
        \hline
        $J=4$ &$KSM$ & {0.512}&{ 0.322}& {0.26}&{ 0.2}& {0.18}\\
        $J=4$&$CVM$ & {0.49}&{ 0.2}& {0.192}&{ 0.16}& {0.08}\\
        \hline
        $J=5$ &$KSM$ & {0.7}&{ 0.514}& {0.4}&{ 0.321}& {0.214}\\
        $J=5$&$CVM$ & {0.59}&{ 0.29}& {0.262}&{ 0.196}& {0.121}\\

        \hline\hline

        \end{tabular}
\caption{Empirical level of $\texttt{KSM}-\texttt{CVM}$ for an $ARFIMA(1,0.3,1)$ process.}
\label{tab5}
\end{table}
\newpage
We notice that in general the empirical levels for the CVM are globally more accurate than the ones for the KSM test, the difference being more significant when the
 strength of the dependence is increased, or when the number of scales that are tested simultaneously get larger. The tests are slightly too conservative in the white noise and the MA case (tables
 \eqref{tab1} and \eqref{tab2}); in the AR(1) case and in the ARFIMA cases, the test rejects the null much too often when the number of scales is large compared to the sample size
 (the difficult problem being in that case to estimate the covariance matrix of the test). For $J=4$, the number of samples required to meet the target rejection rate can be as large as
 $n=4096$ for the CVM test and $n=8192$ for the KSM test. The situation is even worse in the ARFIMA case (tables \eqref{tab4} and \eqref{tab5}).
 When the number of scales is equal to $4$ or $5$, the test rejects the null hypothesis
 much too often.

\begin{figure}[htbp]
\centerline{\includegraphics[width=.85\linewidth]{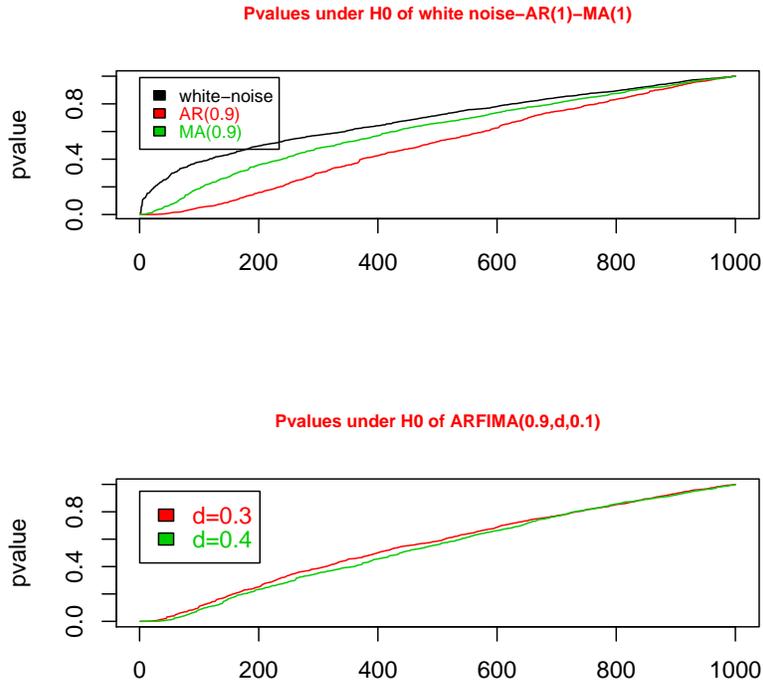}}
\caption{\protect \small Pvalue under $\mathcal{H}_0$ of the distribution $D(J)$ $n=1024$ for white noise and MA(1) processes
  and $n=4096$ for AR(1) and ARFIMA(1,d,1) processes; the coarsest scale is $J=4$ for white noise, MA and AR processes and
  $J=3$ for the ARFIMA process. The finest scale is $J_1=1$.}\label{figresult1}
\end{figure}
\newpage
\subsubsection{Power of $KSM$ and $CVM$.}
We assess the power of test statistic by computing the test statistics in presence of a change in the spectral density. To do so, we consider an observation obtained by concatenation of  $n_1$ observations from a first process and $n_2$ observations from a second process, independent from the first one and having a different spectral density. The length of the resulting observations is $n= n_1+n_2$.
In all cases, we set $n_1=n_2=n/2$, and we present the results for $n_1=512,1024,2048,4096$ and scales $J=4,5$. We consider the following situations: the two processes are white Gaussian noise with two different variances, two AR processes with different values of the autoregressive coefficient, two MA processes with different values of the moving average coefficient and two ARFIMA with same moving average and same autoregressive coefficients but different values of the memory parameter $d$. The scenario considered is a bit artificial but is introduced here to assess the ability of the test to detect abrupt changes in the spectral content. For $1000$ simulations, we report the number of times $\mathcal{H}_1$ was accepted, leading the following results.
\begin{table}\centering
 \begin{tabular}{*{7}{c}}

        \multicolumn{2}{c}{white-noise} & \multicolumn{4}{c}{{ $[\sigma_1^2=1,\, \sigma_2^2=0.7]$}} \\
        \hline\hline
        \multicolumn{2}{c}{$n_1=n_2$}& 512& 1024 & 2048& 4096   \\
        \hline
        $J=4$ &$KSM$ & {0.39}&{ 0.78}& {0.89}&{ 0.95}\\
        $J=4$&$CVM$ & {0.32}&{0.79 }& {0.85}&{0.9 }\\
        \hline
        $J=5$ &$KSM$ & {0.42}&{0.79 }& {0.91}&{0.97 }\\
        $J=5$&$CVM$ & {0.40}&{0.78 }& {0.9}&{0.9 }\\

        \hline\hline

        \end{tabular}
\caption{Power of $\texttt{KSM}-\texttt{CVM}$ on two white noise processes.}
\label{tab6}
\end{table}
\begin{table}\centering
 \begin{tabular}{*{7}{c}}

        \multicolumn{2}{c}{{MA(1)+MA(1)}} & \multicolumn{4}{c}{ $[\theta_1=0.9,\, \theta_2=0.5]$} \\
        \hline\hline
        \multicolumn{2}{c}{$n_1=n_2$}& 512& 1024 & 2048& 4096   \\
        \hline
        $J=4$ &$KSM$ & {0.39}&{0.69}& {0.86}&{0.91}\\
        $J=4$&$CVM$ & {0.31}&{ 0.6}& {0.76}&{ 0.93}\\
        \hline
        $J=5$ &$KSM$ & {0.57}&{ 0.74}& {0.84}&{ 0.94}\\
        $J=5$&$CVM$ & {0.46}&{0.69 }& {0.79}&{0.96 }\\

        \hline\hline

        \end{tabular}
\caption{Power of $\texttt{KSM}-\texttt{CVM}$ on a concatenation of two different $MA$ processes.}
\label{tab7}
\end{table}
\begin{table}\centering
 \begin{tabular}{*{7}{c}}

        \multicolumn{2}{c}{{AR(1)+AR(1)}} & \multicolumn{4}{c}{ $[\phi_1=0.9,\, \phi_2=0.5]$} \\
        \hline\hline
        \multicolumn{2}{c}{$n_1=n_2$}& 512& 1024 & 2048& 4096   \\
        \hline
        $J=4$ &$KSM$ & {0.59}&{ 0.72}& {0.81}&{ 0.87}\\
        $J=4$&$CVM$ & {0.53}&{0.68 }& {0.79}&{0.9 }\\
        \hline
        $J=5$ &$KSM$ & {0.75}&{0.81 }& {0.94}&{0.92 }\\
        $J=5$&$CVM$ & {0.7}&{0.75 }& {0.89}&{0.91}\\

        \hline\hline

        \end{tabular}
\caption{Power of $\texttt{KSM}-\texttt{CVM}$ on a concatenation of two differents $AR$ processes.}
\label{tab8}\end{table}
\begin{table}\centering
 \begin{tabular}{*{7}{c}}

        \multicolumn{1}{c}{{ARFIMA(1,0.3,1)}}&+&{{ARFIMA(1,0.4,1)}} & \multicolumn{3}{c}{ $[\phi=0.9,\,          \theta=0.1]$} \\
        \hline\hline
        \multicolumn{2}{c}{$n_1=n_2$}& 512& 1024 & 2048& 4096   \\
        \hline
        $J=4$ &$KSM$ & {0.86}&{ 0.84}& {0.8}&{ 0.81}\\
        $J=4$&$CVM$ & {0.81}&{ 0.76}& {0.78}&{ 0.76}\\
        \hline
        $J=5$ &$KSM$ & {0.94}&{ 0.94}& {0.9}&{ 0.92}\\
        $J=5$&$CVM$ & {0.93}&{ 0.92}& {0.96}&{ 0.91}\\

        \hline\hline

        \end{tabular}
\caption{Power of $\texttt{KSM}-\texttt{CVM}$ two ARFIMA(1,d,1) with same AR and MA part but two different values of memory parameter $d$.}
\label{tab9}
\end{table}
\begin{figure}[htbp]
\centerline{\includegraphics[width=.8\linewidth]{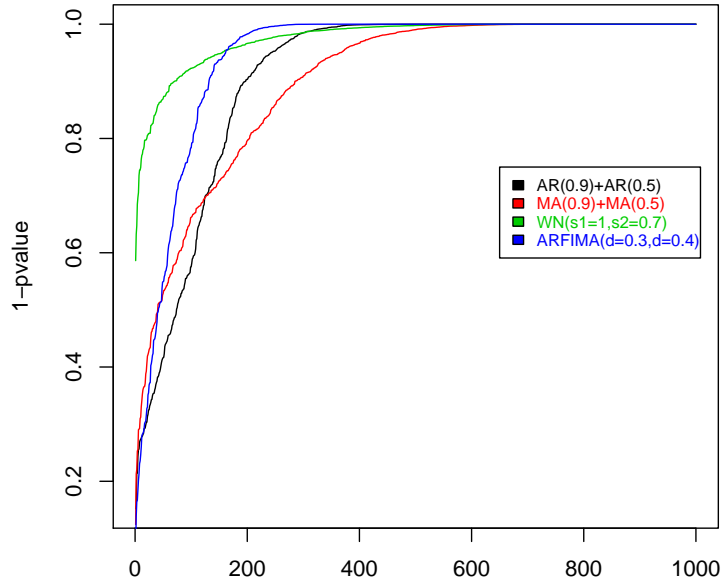}}
\caption{Empirical power of $\mathrm{KSM}(d=4)$ for white noise, AR, MA and ARFIMA processes.}\label{figresult2}
\end{figure}
\\ \indent The power of our two statistics gives us satisfying results for the considered processes, especially if the sample size tends to infinity.
\newpage
\subsubsection{Estimation of the change point in the original process.}
We know that for each scale $j,$ the number $n_j$ of wavelet coefficients is $n_j=2^{-j}(n-\L+1)-\L+1$. If we denote by $k_j$ the change point in the wavelet coefficients at scale $j$ and $k$ the change point in the original signal, then $k=2^j(k_j+\L-1)+\L-1$. In this paragraph, we estimate the change point in the generalized spectral density of a process when it exists and give its 95\% confidence interval. For that, we proceed as before. We consider an observation obtained by concatenation of $n_1$ observations from a first process and $n_2$ observations from a second process, independent from the first one and having a different spectral density. The length of the resulting observations is $n=n_1+n_2$. we estimate the change point in the process and we present the result for $n_1=512,1024,4096,8192$, $n_2=512,2048,8192,$ $J=3$, the statistic $CVM$, two $AR$ processes with different values of the autoregressive coefficient and two $ARFIMA$ with same moving average and same autoregressive coefficients but different values of the memory parameter $d.$ For $10000$ simulations, the bootstrap confidence intervals obtained are set in the tables below. we give also the empirical mean and the median of the estimated change point.
\begin{itemize}
\item $[AR(1),\phi=0.9]$ and $[AR(1),\phi=0.5]$
\begin{table}\centering
 \begin{tabular}{|*{7}{c|}}

        \hline
        $n_1$ &512 & 512&512& 1024&4096&8192\\
        $n_2$ &512&2048 &8192&1024& 4096&8192\\
        \hline
        $MEAN_{CVM}$ &478 &822 & 1853&965&3945&8009\\
        $MEDIAN_{CVM}$ &517 &692 & 1453&1007&4039&8119\\
        $IC_{CVM}$ &[283,661] &[380,1369] & [523,3534]&[637,1350]& [3095,4614]&[7962,8825]\\

        \hline

        \end{tabular}
\caption{Estimation of the change point and confidence interval at 95\% in the generalized spectral density of a process which is obtain by concatenation of two AR(1) processes.}
\label{tab19}
\end{table}

\item $[ARFIMA(1,0.2,1)]$ and $[ARFIMA(1,0.3,1)]$, with $\phi=0.9$ and $\theta=0.2$
\begin{table}\centering
 \begin{tabular}{|*{7}{c|}}

        \hline
        $n_1$ &512 & 512&512& 1024&4096&8192\\
        $n_2$ &512&2048 &8192&1024& 4096&8192\\
        \hline
        $MEAN_{CVM}$ &531 &1162 & 3172&1037&4129&8037\\
        $MEDIAN_{CVM}$ &517 &1115 & 3215&1035&4155&8159\\
        $IC_{CVM}$ &[227,835] &[375,1483]&[817,6300] & [527,1569]& [2985,5830]&[6162,9976]\\

        \hline

        \end{tabular}
\caption{Estimation of the change point and confidence interval at 95\% in the generalized spectral density of a process which is obtain by concatenation of two ARFIMA(1,d,1) processes.}
\label{tab20}
\end{table}
\end{itemize}
We remark that the change point belongs always to the considered confidence interval excepted for $n_1=512,$ $n_2=8192$ where the confidence interval is $[523,3534]$ and the change point $k=512$ doesn't belong it. One can noticed that when the size of the sample increases and
$n_1=n_2$, the interval becomes more accurate. However, as expected, this interval becomes less accurate
when the change appears either at the beginning or at the end of the observations.


\begin{thebibliography}{26}
\providecommand{\natexlab}[1]{#1}
\providecommand{\url}[1]{\texttt{#1}}
\expandafter\ifx\csname urlstyle\endcsname\relax
  \providecommand{\doi}[1]{doi: #1}\else
  \providecommand{\doi}{doi: \begingroup \urlstyle{rm}\Url}\fi

\bibitem[Andrews(1991)]{andrews:1991}
D.~W.~K. Andrews.
\newblock Heteroskedasticity and autocorrelation consistent covariance matrix
  estimation.
\newblock \emph{Econometrica}, 59\penalty0 (3):\penalty0 817--858, 1991.
\newblock ISSN 0012-9682.

\bibitem[Berkes et~al.(2006)Berkes, Horvatz, P., and Shao]{berkes:horvatz:2006}
I.~Berkes, L.~Horvatz, Kokoszka P., and Qi-Man Shao.
\newblock On discriminating between long-range dependence and change in mean,.
\newblock \emph{The annals of statistics}, 34\penalty0 (3):\penalty0
  1140--1165, 2006.

\bibitem[Bhattacharya et~al.(1983)Bhattacharya, Gupta, and
  Waymire]{bhattacharya:gupta:waymire:1983}
R.~N. Bhattacharya, Vijay~K. Gupta, and Ed~Waymire.
\newblock The {H}urst effect under trends.
\newblock \emph{Journal of Applied Probability}, 20\penalty0 (3):\penalty0
  649--662, 1983.
\newblock ISSN 00219002.
\newblock URL \url{http://www.jstor.org/stable/3213900}.

\bibitem[Billingsley(1999)]{billingsley:1999}
P.~Billingsley.
\newblock \emph{Convergence of probability measures}.
\newblock Wiley Series in Probability and Statistics: Probability and
  Statistics. John Wiley \& Sons Inc., New York, second edition, 1999.
\newblock ISBN 0-471-19745-9.
\newblock A Wiley-Interscience Publication.

\bibitem[{Boes} and {Salas}(1978)]{boes:salas:1978}
D.~C. {Boes} and J.~D. {Salas}.
\newblock {Nonstationarity of the Mean and the {H}urst Phenomenon}.
\newblock \emph{Water Resources Research}, 14:\penalty0 135--143, 1978.
\newblock \doi{10.1029/WR014i001p00135}.

\bibitem[Brodsky and Darkhovsky(2000)]{brodsky:darkhovsky:2000}
B.~E. Brodsky and B.~S. Darkhovsky.
\newblock \emph{Non-parametric statistical diagnosis}, volume 509 of
  \emph{Mathematics and its Applications}.
\newblock Kluwer Academic Publishers, Dordrecht, 2000.
\newblock ISBN 0-7923-6328-0.
\newblock Problems and methods.

\bibitem[Carmona et~al.(1999)Carmona, Petit, Pitman, and
  Yor]{carmona:petit:pitman:yor:1999}
P.~Carmona, F.~Petit, J.~Pitman, and M.~Yor.
\newblock On the laws of homogeneous functionals of the {B}rownian bridge.
\newblock \emph{Studia Sci. Math. Hungar.}, 35\penalty0 (3-4):\penalty0
  445--455, 1999.
\newblock ISSN 0081-6906.

\bibitem[Cohen(2003)]{cohen:2003}
A.~Cohen.
\newblock \emph{Numerical analysis of wavelet methods}, volume~32 of
  \emph{Studies in Mathematics and its Applications}.
\newblock North-Holland Publishing Co., Amsterdam, 2003.
\newblock ISBN 0-444-51124-5.

\bibitem[Diebold and Inoue(2001)]{diebold:inoue:2001}
F.~X. Diebold and A.~Inoue.
\newblock Long memory and regime switching.
\newblock \emph{J. Econometrics}, 105\penalty0 (1):\penalty0 131--159, 2001.
\newblock ISSN 0304-4076.

\bibitem[Giraitis et~al.(2003)Giraitis, Kokoszka, Leipus, and
  Teyssi{\`e}re]{giraitis:kokoszka:2003}
L.~Giraitis, P.~Kokoszka, R.~Leipus, and G.~Teyssi{\`e}re.
\newblock Rescaled variance and related tests for long memory in volatility and
  levels.
\newblock \emph{Journal of econometrics}, 112\penalty0 (4):\penalty0 265--294,
  2003.
\newblock ISSN 0090-5364.

\bibitem[Granger and Hyung(1999)]{granger:hyung:1999}
C.~W. Granger and N.~Hyung.
\newblock Occasional structural breaks and long memory.
\newblock \emph{Journal of Empirical Finance}, 11:\penalty0 99--14, 1999.

\bibitem[Hidalgo and Robinson(1996)]{hidalgo:robinson:1996}
J.~Hidalgo and P.~M. Robinson.
\newblock Testing for structural change in a long-memory environment.
\newblock \emph{J. Econometrics}, 70\penalty0 (1):\penalty0 159--174, 1996.
\newblock ISSN 0304-4076.

\bibitem[Hurvich et~al.(2005)Hurvich, Lang, and
  Soulier]{hurvich:lang:soulier:2005}
C.~M. Hurvich, G.~Lang, and P.~Soulier.
\newblock Estimation of long memory in the presence of a smooth nonparametric
  trend.
\newblock \emph{J. Amer. Statist. Assoc.}, 100\penalty0 (471):\penalty0
  853--871, 2005.
\newblock ISSN 0162-1459.

\bibitem[Inclan and Tiao(1994)]{inclan:tiao:1994}
C.~Inclan and G.~C. Tiao.
\newblock Use of cumulative sums of squares for retrospective detection of
  changes of variance.
\newblock \emph{American Statistics}, 89\penalty0 (427):\penalty0 913--923,
  1994.
\newblock ISSN 0012-9682.

\bibitem[Kiefer(1959)]{kiefer:1959}
J.~Kiefer.
\newblock {$K$}-sample analogues of the {K}olmogorov-{S}mirnov and
  {C}ram\'er-{V}. {M}ises tests.
\newblock \emph{Ann. Math. Statist.}, 30:\penalty0 420--447, 1959.
\newblock ISSN 0003-4851.

\bibitem[{Klemes}(1974)]{klemes:1974}
V.~{Klemes}.
\newblock {The {H}urst Phenomenon: A Puzzle?}
\newblock \emph{Water Resources Research}, 10:\penalty0 675--688, 1974.
\newblock \doi{10.1029/WR010i004p00675}.

\bibitem[Mallat(1998)]{mallat:1998}
S.~Mallat.
\newblock \emph{A wavelet tour of signal processing}.
\newblock Academic Press Inc., San Diego, CA, 1998.
\newblock ISBN 0-12-466605-1.

\bibitem[Mikosch and St{\u{a}}ric{\u{a}}(2004)]{mikosch:starica:2004}
T.~Mikosch and C.~St{\u{a}}ric{\u{a}}.
\newblock Changes of structure in financial time series and the {G}arch model.
\newblock \emph{REVSTAT}, 2\penalty0 (1):\penalty0 41--73, 2004.
\newblock ISSN 1645-6726.

\bibitem[Moulines et~al.(2007)Moulines, Roueff, and
  Taqqu]{moulines:roueff:taqqu:2007:jtsa}
E.~Moulines, F.~Roueff, and M.~S. Taqqu.
\newblock On the spectral density of the wavelet coefficients of long memory
  time series with application to the log-regression estimation of the memory
  parameter.
\newblock \emph{J. Time Ser. Anal.}, 28\penalty0 (2), 2007.

\bibitem[Moulines et~al.(2008)Moulines, Roueff, and
  Taqqu]{moulines:roueff:taqqu:2008:aos}
E.~Moulines, F.~Roueff, and M.S. Taqqu.
\newblock A wavelet {W}hittle estimator of the memory parameter of a
  non-stationary {G}aussian time series.
\newblock \emph{Ann. Statist.}, 36\penalty0 (4):\penalty0 1925--1956, 2008.

\bibitem[Newey and West(1987)]{newey:west:1987}
W.~K. Newey and K.~D. West.
\newblock A simple, positive semidefinite, heteroskedasticity and
  autocorrelation consistent covariance matrix.
\newblock \emph{Econometrica}, 55\penalty0 (3):\penalty0 703--708, 1987.
\newblock ISSN 0012-9682.

\bibitem[Newey and West(1994)]{newey:west:1994}
W.~K. Newey and K.~D. West.
\newblock Automatic lag selection in covariance matrix estimation.
\newblock \emph{Rev. Econom. Stud.}, 61\penalty0 (4):\penalty0 631--653, 1994.
\newblock ISSN 0034-6527.

\bibitem[Pitman and Yor(1999)]{pitman:yor:1999}
J.~Pitman and M.~Yor.
\newblock The law of the maximum of a {B}essel bridge.
\newblock \emph{Electron. J. Probab.}, 4:\penalty0 no.\ 15, 35 pp.\
  (electronic), 1999.
\newblock ISSN 1083-6489.

\bibitem[Potter(1976)]{potter:1976}
K.W. Potter.
\newblock Evidence of nonstationarity as a physical explanation of the {H}urst
  phenomenon.
\newblock \emph{Water Resources Research}, 12:\penalty0 1047--1052, 1976.

\bibitem[Rao and Yu(1986)]{rao:yu:1986}
A.~Ramachandra Rao and G.~H. Yu.
\newblock Detection of nonstationarity in hydrologic time series.
\newblock \emph{Management Science}, 32\penalty0 (9):\penalty0 1206--1217,
  1986.
\newblock ISSN 00251909.
\newblock URL \url{http://www.jstor.org/stable/2631546}.

\bibitem[Whitcher et~al.(2001)Whitcher, Byers, Guttorp, and D.]{whitcher:2001}
B.~Whitcher, S.~D. Byers, P.~Guttorp, and Percival D.
\newblock Testing for homogeneity of variance in time series : Long memory,
  wavelets and the nile river.
\newblock \emph{Water Resources Research}, 38(5), 2002.

\end{thebibliography}
\end{document}